\newtheorem{remark}{Remark}
\newtheorem{theorem}{Theorem}
\newtheorem{lemma}{Lemma}
\newtheorem{corollary}{Corollary}
\newtheorem{proposition}{Proposition}
\title{NOMA-aided Joint Communication, Sensing, and Multi-tier Computing Systems}
\author{

        Zhaolin~Wang,~\IEEEmembership{Graduate Student Member,~IEEE,}
        Xidong~Mu,~\IEEEmembership{Member,~IEEE,}\\
        Yuanwei~Liu,~\IEEEmembership{Senior Member,~IEEE,}
        Xiaodong~Xu,~\IEEEmembership{Senior Member,~IEEE,}\\
        and Ping Zhang,~\IEEEmembership{Fellow,~IEEE}

\thanks{Part of this paper will be presented at the IEEE Global Communications Conference, Rio de Janeiro, Brazil, Dec. 4-8, 2022}
\thanks{Zhaolin Wang, Xidong Mu, and Yuanwei Liu are with the School of Electronic Engineering and Computer Science, Queen Mary University of London, London E1 4NS, U.K. (e-mail: zhaolin.wang@qmul.ac.uk, xidong.mu@qmul.ac.uk, yuanwei.liu@qmul.ac.uk).}
\thanks{Xiaodong Xu and Ping Zhang are with State Key Laboratory of Networking and Switching Technology, Beijing University of Posts and Telecommunications, Beijing 100876, China. (e-mail: xuxiaodong@bupt.edu.cn, pzhang@bupt.edu.cn).}
}
\begin{document}

\maketitle
\vspace{-1cm}

\begin{abstract}
    A non-orthogonal multiple access (NOMA)-aided joint communication, sensing, and multi-tier computing (JCSMC) framework is proposed. In this framework, a multi-functional base station (BS) simultaneously carries out target sensing and provide edge computing services to the nearby users. To enhance the computation efficiency, the multi-tier computing structure is exploited, where the BS can further offload the computation tasks to a powerful Cloud server (CS). The potential benefits of employing NOMA in the proposed JCSMC framework are investigated, which can maximize the computation offloading capacity and suppress inter-functionality interference. Based on the proposed framework, the transmit beamformer of the BS and computing resource allocation among the BS and CS are jointly optimized to maximize the computation rate subject to the communication-computation causality and the sensing quality constraints. Both partial and binary computation offloading modes are considered: 1) For the partial offloading mode, a weighted minimum mean square error based alternating optimization algorithm is proposed to solve the corresponding non-convex optimization problem. It is proved that a Karush–Kuhn–Tucker optimal solution can be obtained; 2) For the binary offloading mode, the resultant highly-coupled mixed-integer optimization problem is first transformed to an equivalent but more tractable form. Then, the reformulated problem is solved by utilizing the alternating direction method of multipliers approach to obtain a nearly optimal solution. Finally, numerical results verify the effectiveness of the proposed algorithms and reveal that: i) the computation rate can be significantly enhanced by exploiting the multi-tier computing architecture when the BS is resource-limited, and ii) the proposed NOMA-aided JSCMC framework is superior in inter-functionality interference management and can achieve high-quality sensing and computing performance simultaneously compared with other benchmark schemes.
\end{abstract}

\begin{IEEEkeywords}
{B}eamforming design, integrated sensing and communication (ISAC), multi-tier computing, non-orthogonal multiple access (NOMA).
\end{IEEEkeywords}

\section{Introduction}

Recently, with the rapid evolution of emerging services like autonomous driving, smart cities, industrial Internet of things (IIoT), and Metaverse, the network nodes in 6G are envisioned to go beyond communication and should carry out multiple functionalities in an integrated manner, such as the high-performance low-latency computing and high-precision environmental sensing \cite{latva2020key}. In the past decade, mobile edge computing (MEC), has long been regarded as an effective solution to provide Cloud-like computing services to mobile devices \cite{mao2017survey}. However, to address the challenges of growing computing demand in future wireless networks, multi-tier computing (MTC) is required to leverage the computing resources along the continuum from Cloud to mobile devices, which enables flexible and powerful computing services through collaboration between different tiers \cite{yang2019multi}. Meanwhile, wireless radio sensing, where the characteristic of the environment is extracted from the radio frequency signals, is regarded as a promising sensing solution for future networks, due to its striking similarities with wireless communications in terms of hardware facilities and signal processing pipelines. Consequently, the recent research theme of integrated sensing and communication (ISAC) has received heated discussion, with the aim of pursuing high resource efficiency and the mutual benefits between the two functionalities \cite{liu2021integrated}. In the above context, future network nodes will go beyond merely providing communication service and evolve towards the integration of communication, sensing, and computing \cite{feng2021joint, zhangtowards}, which brings new challenges of complicated resource coordination and interference management between the three functions. 

Non-orthogonal multiple access (NOMA) has been recognized as a promising multiple access technique for next generation wireless networks \cite{liu2022evolution}. By exploiting the superposition coding (SC) and successive interference cancellation (SIC) at the transmitter and the receiver, respectively, NOMA allows multiple users to share the same resource block. As such, the favorable characteristics of NOMA in terms of resource allocation and interference mitigation have been extensively investigated in various scenarios \cite{liu2016cooperative, zhu2017non, mu2020exploiting, chen2021irs, wang2022iterative}, which is appealing to deal with the aforementioned challenges of integrating the communication, sensing, and computing functionalities. Hence, it is highly valuable to evaluate the potential benefits of NOMA in multi-functional wireless networks.


\subsection{Related Works}
\subsubsection{Studies on MEC and MTC}
Due to the favorable characteristics of MEC, it has been widely discussed upon energy efficiency maximization \cite{you2016energy}, end-to-end latency minimization \cite{mao2016dynamic}, and computation rate maximization \cite{zhou2018computation}. Nevertheless, the aforementioned works put their focus on single-tier computing architecture comprising only an edge node. To fully exploit the MTC resources, various collaborative edge/fog and Cloud architecture has been proposed in \cite{masip2016foggy, du2017computation, ren2019collaborative, hong2019multi, wang2021multi, wang2022joint}. For example, a hybrid massive MIMO relay and reconfigurable intelligent surface (RIS) aided offloading scheme was conceived in \cite{wang2021multi} to facilitate the exploitation of MTC resources. Another massive-MIMO-relay-aided MTC architecture was studied in \cite{wang2022joint} considering the caching service at the edge node. Furthermore, since communication performance plays an important role in mobile computing, the application of NOMA to facilitate task offloading has also attracted significant attentions \cite{ding2018impact, wang2018multi, song2018joint, yang2020cache, ding2022hybrid}. More specifically, the impact of NOMA on task offloading was evaluated in \cite{ding2018impact} through various asymptotic performance analyses. The authors of \cite{wang2018multi} jointly designed communication and computing resource allocation of NOMA-MEC systems and demonstrated the advantages of NOMA in enhancing energy efficiency. The authors of \cite{song2018joint} and \cite{yang2020cache} further studied and demonstrated the advantages of NOMA in heterogeneous MEC networks and cache-aided MEC networks, respectively. As an innovative contribution, the authors of \cite{ding2022hybrid} conceived a novel hybrid-NOMA offloading scheme for MEC systems, which is appealing in terms of energy efficiency.

\subsubsection{Studies on ISAC}
The design of ISAC systems can be divided into three categories, namely sensing-centric design, communication-centric design, and joint design. For the sensing-centric design, communication can be achieved by embedding information bits into classical sensing signals, such as chirp signals, through beamforming modulation \cite{hassanien2015dual} or index modulation \cite{huang2020majorcom}. On the contrary, the communication-centric design normally relies on the well-known OFDM waveform due to its good properties for both communication and sensing \cite{keskin2021mimo, johnston2022mimo}. However, sensing-centric and communication-centric designs cannot achieve a scalable trade-off between the two functions. As a remedy, joint design is regarded as a promising methodology, which aims at designing the new unified ISAC waveforms \cite{liu2018toward, liu2020beamforming, hua2021optimal}. Recent research contributions have also evaluated the benefits of NOMA in ISAC \cite{mu2022noma, wang2022noma, mu2021noma, wang2022inspire}. The authors of \cite{wang2022noma} proposed a NOMA-empowered ISAC framework, where the superimposed NOMA signals are employed for both communication and sensing to counteract the degradation of spatial DoFs. The authors of \cite{mu2021noma} considered a scenario where the sensing targets also require information transmission and proposed a pair of multicast-unicast ISAC frameworks based on beamformer-based and cluster-based NOMA techniques, respectively. As a further advance, the authors of \cite{wang2022inspire} exploited a composite signal model and developed a NOMA-inspired scheme to achieve the double benefits of the sensing signal for both sensing and communication.

\vspace{-0.3cm}
\subsection{Motivation and Contributions}
Against the above background, the base station (BS) in the future wireless network is expected to provide multiple functionalities, thus realizing new revolutionary applications. As a result, besides the classical wireless communication, it has to provide computing service to the nearby users as well as carrying out target sensing. However, the direct integration of computing and sensing upon the BS may not be a happy marriage. On the one hand, computing and sensing mainly rely on the separated hardware architectures of the BS, namely the computation server and the transceiver. Thus, it is difficult to effectively share the resources between the two functions. On the other hand, the sensing echo signals received by the BS from the target will experience strong interference caused by the computation offloading signals uploaded from the nearby users, namely inter-functionality interference, which significantly degrades the sensing performance. To address these challenges, we propose a NOMA-aided joint communication, sensing, and multi-tier computing (JCSMC) framework to achieve an efficient integration of multiple functionalities at the BS. Firstly, by leveraging the MTC structure, the sensing signal transmitted by the BS can be information-bearing based on the ISAC waveform design and further offload the computation tasks to a Could server (CS). As such, MTC achieves \emph{double benefits} of the sensing signal for both sensing and computing, thus improving  resource efficiency. Secondly, by exploiting NOMA, the computation offloading signals can be decoded through SIC, thus reducing the interference within the computation functionality. Furthermore, the computation offloading signals will not cause interference to the sensing signal since they are already removed in SIC, thus reducing the inter-functionality interference. Hence, NOMA can also achieve \emph{double benefits} for both sensing and computing. The main contributions of this paper are summarized as follows:
\begin{itemize}
    \item We propose a novel NOMA-aided JCSMC framework, in which the MTC and NOMA techniques are employed for facilitating the integration of multiple functionalities. Considering both the partial and binary computation offloading modes, we formulate a computation rate maximization problem for the joint optimization of transmit beamformer and the computing resource allocation, subject to the constraints on communication-computation causality and the sensing quality.

    \item For the partial offloading mode, we propose a weighted minimum mean square error (WMMSE)-based alternating optimization (AO) algorithm to solve the resultant non-convex 
    computation rate maximization problem. More specifically, we first introduce an equivalent communication model of sensing and then establish a set of rate-WMMSE relationships to transform the non-convex problem into a convex one. Furthermore, we theoretically prove that a Karush–Kuhn–Tucker (KKT) optimal solution to the primal problem can be obtained by the proposed algorithm.

    \item For the binary offloading mode, the additional binary variables corresponding to the offloading decisions lead to a highly-coupled mixed-integer optimization problem. To tackle it, we first transform the binary variables and coupled variables into a series of equivalent equality constraints. Then, we propose an alternating direction method of multipliers (ADMM)-based AO algorithm to obtain a high-quality suboptimal solution.

    \item Our simulation results unveil that the proposed NOMA-aided JCSMC framework achieves a better computing-sensing trade-off compared with the benchmark schemes relying on
    the single-tier computing architecture and without NOMA for both partial offloading and binary offloading modes. It also demonstrates that NOMA plays an important role in guaranteeing computing performance when the requirement of sensing quality and user number increase. 
\end{itemize}

\subsection{Organization and Notations}

The rest of this paper is organized as follows. In Section \ref{sec:system_model}, the NOMA-aided JCSMC framework is conceived. In Section \ref{sec:partial_offloading}, a computation rate maximization problem is formulated for the partial offloading mode, which is solved by the developed WMMSE-based AO algorithm. In Section \ref{sec:binary_offloading}, another computation rate maximization problem is formulated for the binary offloading mode, which is addressed by the proposed ADMM-based AO algorithm. In Section \ref{sec:results}, the numerical results are provided for characterizing the proposed framework compared with the benchmark schemes. Finally, Section \ref{sec:conclusion} concludes this paper.

\emph{Notations}: Scalars, vectors, and matrices are denoted by the lower-case, bold-face lower-case, and bold-face upper-case letters, respectively; $\mathbb{C}^{N \times M}$ and $\mathbb{R}_+^{N \times M}$ denotes the space of $N \times M$ complex and non-negative real matrices, respectively; $|a|$ denotes the magnitude of scalar $a$; $\mathbf{a}^H$ and $\|\mathbf{a}\|_p$ denotes the conjugate transpose and $\ell_p$-norm of vector $\mathbf{a}$;  $\mathrm{diag}(\mathbf{a})$ denotes a diagonal matrix with same value as the vector $\mathbf{a}$ on the diagonal; $\mathrm{tr}(\mathbf{A})$ denotes the trace of matrix $\mathbf{A}$; $\mathbb{E}[\cdot]$ denotes the statistical expectation and $\mathrm{Re}\{\cdot\}$ denotes the real component of a complex number; $\mathcal{CN}(\mu, \sigma^2)$ denotes the distribution of a circularly symmetric complex Gaussian (CSCG) random variable with mean $\mu$ and variance $\sigma^2$.

\section{System Model} \label{sec:system_model}

As shown in Fig. \ref{fig:system_model}, we propose a NOMA-aided JCSMC framework comprising an $N$-antenna multi-functional BS, $K$ single-antenna computation users, whose indices are collected in $\mathcal{K}=\{1,\dots,K\}$, one sensing target under an angle of $\theta_0$, $M$ clutter sources under angles of $\{\theta_m\}_{m=1}^M$. Moreover, the computation tasks at the BS can also be further offloaded to a single-antenna CS by leveraging the joint ISAC signal. We assume that the users always offload the computation tasks to the BS due to their limited computing capabilities. Furthermore, it is also assumed that there is no direct link between the users and the CS \cite{hong2019multi, wang2022joint}.

\begin{figure}[t!]
    \centering
    \includegraphics[width=0.4\textwidth]{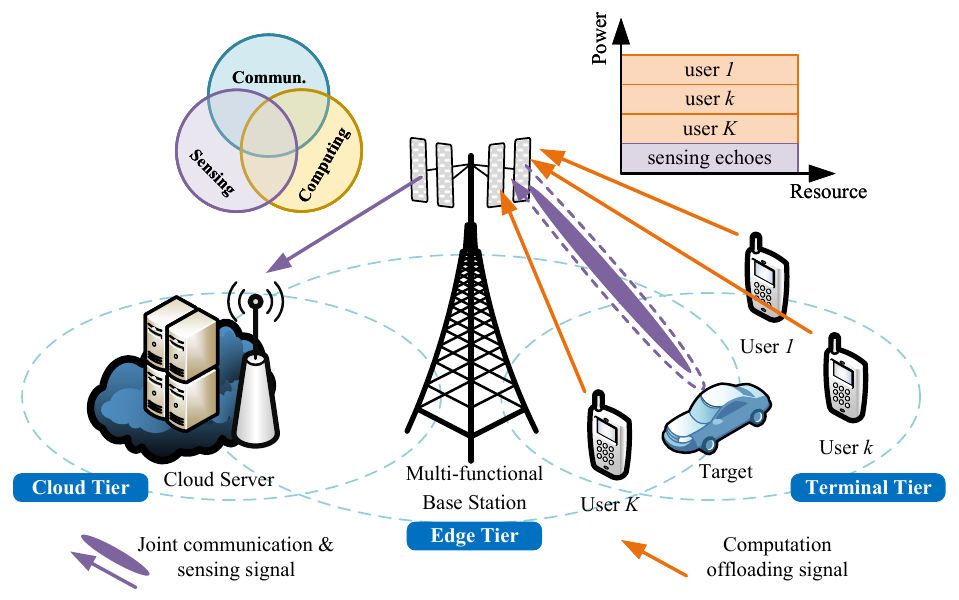}
    \caption{Illustration of the proposed NOMA-aided JCSMC framework.}
    \label{fig:system_model}
\end{figure}

\subsection{Signal Model}
In the proposed JCSMC framework, the received signal at the multi-functional BS consists of three parts, namely the computation offloading signal from $K$ computation users, echo signal from one desired sensing target at the angle of $\theta_0$, and clutter signals from $M$ undesired clutter sources. In this paper, we assume that the BS has prior knowledge of the target and clutter sources. In particular, the prior knowledge of the target can be acquired based on the estimation results in the previous sensing phase, while that of the clutter sources can be obtained from a cognitive paradigm based on the environmental dynamic database \cite{lane2010cognitive, aubry2013knowledge, liu2021joint}. Moreover, the channel state information (CSI) for communication is also assumed to be perfectly known at the BS. Denote $\tilde{s}_{u,k} = \sqrt{P_u} s_{u,k}$ as the transmit signal of computation user $k$ for delivering the information stream $s_{u,k} \in \mathbb{C}$ to the BS with transmit power $P_u$. Then, the received superimposed computation offloading signals from $K$ users at the BS is given by 
\begin{equation}
    \mathbf{y}_c = \sum_{k \in \mathcal{K}} \sqrt{P_u} \mathbf{h}_{u,k} s_{u,k},
\end{equation}
where $\mathbf{h}_{u,k} = L_{u,k}^{-1/2} \tilde{\mathbf{h}}_{u,k}$ denotes the channel between the BS and computation user $k$, $L_{u,k} \in \mathbb{R}_+$ and $\tilde{\mathbf{h}}_{u,k} \in \mathbb{C}^{N \times 1}$ denote the path loss and small-scale fading, respectively. Furthermore, denote $\mathbf{x}$ as the transmit signal at the BS for target sensing. Then, the echo signal from the desired target plus the clutter signals received at the BS can be written as 
\begin{equation}
    \mathbf{y}_s = \underbrace{\beta_0 \mathbf{a}(\theta_0) \mathbf{a}^T(\theta_0) \mathbf{x}}_{ \text{echo signal from desired target}} + \underbrace{\sum_{m \in \mathcal{M} \setminus 0} \beta_m \mathbf{a}(\theta_m) \mathbf{a}^T(\theta_m) \mathbf{x}}_{\scriptstyle \text{clutter signals from clutter sources}},
\end{equation}
where $\mathcal{M} = \{0,\dots,M\}$ denotes the index set of the desired target and clutter sources, $\beta_m = L_{s,m}^{-1/2} \tilde{\beta}_m$ includes both round-trip path loss $L_{s,m} \in \mathbb{R}_+$ and the complex reflection factor $\tilde{\beta}_m \in \mathbb{C}$, and $\mathbf{a}(\theta_m) \in \mathbb{C}^{N \times 1}$ denotes the steering vector of the antenna array at the BS under the angle of $\theta_m$. Assuming the planar wave and the uniform linear array (ULA) at the BS, the steering vector is given by 
\begin{equation}
    \mathbf{a}(\theta_m) = [1, e^{j\frac{2\pi}{\lambda}d\sin({\theta_m})},...,e^{j\frac{2\pi}{\lambda}d(N-1)\sin({\theta_m})}]^T,
\end{equation}
where $\lambda$ and $d$ denote the signal wavelength and antenna spacing, respectively. Note that the BS in the considered system operates as mono-static sensing, which can be subject to strong self-interference (SI). Nevertheless, with the advanced SI cancellation techniques, a $100$dB SI suppression can be achieved \cite{liu2021integrated}. Accordingly, we assume the perfect SI cancellation in this paper. As a consequence, the overall received signal at the BS is given by 
\begin{equation} \label{eqn:BS_signal}
    \mathbf{y}_u = \mathbf{y}_c + \mathbf{y}_s + \mathbf{n}_u,
\end{equation}
where and $\mathbf{n}_u \sim \mathcal{CN}(\mathbf{0}, \sigma_u^2 \mathbf{I}_N)$ denotes the complex Gaussian noise at the receiver. 

In order to fully exploit the transmit signal $\mathbf{x}$ at the BS, the information stream is embedded into the signal for further offloading the computation task to the powerful CS. In this case, the transmit signal $\mathbf{x}$ can be rewritten as 
\begin{equation}
    \mathbf{x} = \mathbf{p} s_d,
\end{equation} 
where $\mathbf{p} \in \mathbb{C}^{N \times 1}$ denotes the transmit beamformer and $s_d \in \mathbb{C}$ denotes the information stream from the BS to the CS. Hence, the received signal at the CS is given by 
\begin{equation} \label{eqn:receive_FN}
    y_d = \mathbf{h}_d^H \mathbf{x} + n_d = \mathbf{h}_d^H \mathbf{p} s_d + n_d,
\end{equation}
where $\mathbf{h}_d \in \mathbb{C}^{N \times 1}$ denotes the BS-CS channel and $n_d \sim \mathcal{CN}(0, \sigma_d^2)$ denote the complex Gaussian noise. 
We also assume that $s_d$ and $\{s_{u,k}\}_{k=1}^K$ are independent random variables with zero mean and unit power. Note that even though the sensing signal is information-bearing, the sensing functionality can still be carried out since the full knowledge of the transmit waveform is still available at the BS.

\subsection{Communication Model} \label{sec:communication_model}
The performance of the computation task offloading in the proposed NOMA-aided JCSMC framework is determined by the achievable communication rate. For decoding the computation offloading signal from $K$ computation users at the BS, the uplink NOMA is exploited, where the receiver exploits the SIC to successively decode each stream and cancel the effect of the decoded streams from the received signal \cite{liu2022evolution}. Let the permutation $\pi$ over the set $\mathcal{K}$ denote the successive decoding order of the streams from $K$ computation users. For instance, if $\pi(k) = j$, $s_{u,k}$ will be the $j$-th stream to be decoded in the SIC process. With this decoding order, all the information streams $s_{u,i}$ with $\pi(i) < \pi(k)$ are decoded and cancelled from the received signal before decoding $s_{u,k}$. Thus, the effective signal for decoding $s_{u,k}$ via the receiver $\mathbf{w}_{u,k} \in \mathbb{C}^{N \times 1}$ is given by 
\begin{align} \label{eqn:NOMA_signal}
    &\mathbf{w}_{u,k}^H \mathbf{y}_{u,k} = \sqrt{P_u} \mathbf{w}_{u,k}^H \mathbf{h}_{u,k} s_{u,k} \nonumber \\
    &+ \mathbf{w}_{u,k}^H \underbrace{\left(\sum_{ \pi(i) > \pi(k) } \sqrt{P_u} \mathbf{h}_{u,i} s_{u,i} + (\mathbf{H}_s + \mathbf{H}_c) \mathbf{p}s_d + \mathbf{n}_u\right)}_{\text{inter-user interference plus noise }\tilde{\mathbf{n}}_k},
\end{align} 
where $\mathbf{H}_s = \beta_0 \mathbf{a}(\theta_0) \mathbf{a}(\theta_0)^T$ and $\mathbf{H}_c = \sum_{m=1}^M \beta_m \mathbf{a}(\theta_m) \mathbf{a}(\theta_m)^T$. Then, the achievable communication rate (in bit/s/Hz) from user $k$ to the BS is given by 
\begin{equation} \label{eqn:rate_user_EN}
    R_{u,k} = \log_2\left( 1 + \frac{P_u |\mathbf{w}_{u,k}^H \mathbf{h}_{u,k}|^2 }{\mathbf{w}_{u,k}^H \mathbf{R}_{\tilde{\mathbf{n}}_{k}} \mathbf{w}_{u,k} } \right).
\end{equation}
where the matrix $\mathbf{R}_{\tilde{\mathbf{n}}_{k}}$ denotes the covariance matrix of the inter-user interference plus noise $\tilde{\mathbf{n}}_{k}$ as follows:
\begin{align}
    &\mathbf{R}_{\tilde{\mathbf{n}}_{k}} = \mathbb{E} [\tilde{\mathbf{n}}_{k} \tilde{\mathbf{n}}_{k}^H ] \nonumber \\
    = & \!\! \sum_{\pi(i)>\pi(k)} \!\! P_u \mathbf{h}_{u,i} \mathbf{h}_{u,i}^H + (\mathbf{H}_s+\mathbf{H}_c) \mathbf{p} \mathbf{p}^H (\mathbf{H}_s+\mathbf{H}_c)^H + \sigma_u^2 \mathbf{I}_N.
\end{align}
According to \cite[Chapter 8]{tse2005fundamentals}, MMSE receiver is optimal to achieve the communication capacity for uplink NOMA, which is designed as follows:
\begin{equation}
    \mathbf{w}_{u,k}^{\text{MMSE}} = \sqrt{P_u} \mathbf{R}_{\tilde{\mathbf{n}}_{k}}^{-1} \mathbf{h}_{u,k}.
\end{equation} 
By substituting $\mathbf{w}_{u,k}^{\text{MMSE}}$ into \eqref{eqn:rate_user_EN}, the achievable communication rate can be rewritten as
\begin{equation}
    R_{u,k} = \log_2 \left( 1 + P_u \mathbf{h}_{u,k}^H \mathbf{R}_{\tilde{\mathbf{n}}_{k}}^{-1}  \mathbf{h}_{u,k} \right).
\end{equation} 
Furthermore, according to the received signal at the CS in \eqref{eqn:receive_FN}, the achievable communication rate (in bit/s/Hz) from the BS to the CS is given by 
\begin{equation}
    R_d = \log_2 \left( 1 + \frac{|\mathbf{h}_d^H \mathbf{p}|^2}{\sigma_d^2} \right).
\end{equation}

\subsection{Sensing Model}
After decoding all the offloading signals from $K$ users via SIC, there will be no interference from the offloading signals to the target sensing, i.e., the signal $\mathbf{y}_c$ has been removed from $\mathbf{y}_u$ in \eqref{eqn:BS_signal}. Thus, the effective signal via the receiver $\mathbf{w}_s \in \mathbb{C}^{N \times 1}$ for target sensing is given by 
\begin{equation} \label{eqn:effective_sensing_signal}
    \mathbf{w}_s^H (\mathbf{y}_u - \mathbf{y}_c) = \mathbf{w}_s^H \mathbf{H}_s \mathbf{p} s_d + \mathbf{w}_s^H \underbrace{\left( \mathbf{H}_c \mathbf{p} s_d + \mathbf{n}_u \right)}_{\text{clutter interference plus noise } \mathbf{c}},
\end{equation} 
which yields the following sensing SINR \cite{chen2009mimo}:
\begin{equation} \label{eqn:sensing_SINR_0}
    \gamma_s = \frac{ |\mathbf{w}_s^H \mathbf{H}_s \mathbf{p}|^2 }{ \mathbf{w}_s^H \mathbf{R}_{\mathbf{c}} \mathbf{w}_s}.
\end{equation}
Here, $\mathbf{R}_{\mathbf{c}}$ denotes the covariance matrix of the clutter interference plus noise, which is given by
\begin{equation}
    \mathbf{R}_{\mathbf{c}} = \mathbb{E}[\mathbf{c} \mathbf{c}^H] = \mathbf{H}_c \mathbf{p} \mathbf{p}^H \mathbf{H}_c^H + \sigma_u^2 \mathbf{I}_N.
\end{equation}
Note that the sensing SINR $\gamma_s$ determines the performance of both detection and localization of the target sensing \cite{chen2022generalized}. Thus, in order to guarantee the sensing performance, the sensing SINR $\gamma_s$ should satisfy the minimum value constraint, i.e., $\gamma_s \ge \gamma_{r, \min}$. In this case, the optimal receiver $\mathbf{w}_s^\star$ needs to be designed for maximizing $\gamma_s$, which can be transformed into the well-known minimum variance distortionless response (MVDR) problem. According to \cite{chen2009mimo}, the optimal receiver is given by 
\begin{equation} \label{eqn:sensing_rev}
    \mathbf{w}_s^{\text{MVDR}} = \arg \max_{\mathbf{w}_s} \gamma_s = \frac{\mathbf{R}_{\mathbf{c}}^{-1}\mathbf{H}_s\mathbf{p}}{\mathbf{p}^H \mathbf{H}_s^H \mathbf{R}_{\mathbf{c}}^{-1}\mathbf{H}_s\mathbf{p} }.
\end{equation}
By substituting $\mathbf{w}_s^{\text{MVDR}}$ into \eqref{eqn:sensing_SINR_0}, the sensing SINR $\gamma_s$ can be rewritten as
\begin{equation} \label{eqn:sensing_SINR}
    \gamma_s = \mathbf{p}^H \mathbf{H}_s^H \mathbf{R}_{\mathbf{c}}^{-1} \mathbf{H}_s \mathbf{p}.
\end{equation}
In the above two subsections, we first successively decode all the communication signal $\mathbf{y}_u$ subject to the interference from the sensing via SIC at the BS, and the sensing signal $\mathbf{y}_s$ is communication-interference free. Indeed, since the sensing signal $\mathbf{y}_s$ is information-bearing in the proposed framework, it can also be decoded firstly subject to communication interference, and the communication signal will be sensing-interference free. However, such an operation mechanism is not preferred in the proposed framework due to the following reasons. On the one hand, since the sensing signal experiences a round-trip pathloss, the power of $\mathbf{y}_s$ is typically much lower than that of $\mathbf{y}_u$, resulting in a very low SINR for decoding $\mathbf{y}_s$ in SIC. On the other hand, the embedded information stream $s_d$ need to be decodable at both BS and CS to carry out SIC. In this case, the achievable rate of $s_d$ at the CS will be significantly limited by that at the BS.

\subsection{Computation Model}
Next, we present the computation model of the proposed NOMA-aided JCSMC framework, where the delay caused by downloading the computation results from the servers is omitted due to the superior computing capability of the servers and the relatively small size of the results. Denote $\mathcal{X}_{\mathrm{BS}} \subseteq \mathcal{K}$ and $\mathcal{X}_{\mathrm{CS}} \subseteq \mathcal{K}$ as the sets of users that execute their computation tasks at the BS and the CS, respectively, $\phi$ as the computation cycles for processing one bit of data from each user, which depends on the nature of the computation tasks, and $f_{b,k}$ as the CPU-cycle frequency (cycle/s) allocated by the BS to user $k$. Then, the computation rate, i.e., the number of bits processed per second, of user $k$ at the BS is given by
\begin{equation} \label{eqn:comp_rate_edge}
    r_{b,k} = \frac{f_{b,k}}{\phi}, \forall k \in \mathcal{X}_{\mathrm{BS}} \quad \text{and} \quad r_{b,k} = 0, \forall k \notin \mathcal{X}_{\mathrm{BS}}.
\end{equation}
Similarly, the computation rate of user $k$ at the CS is given by
\begin{equation} \label{eqn:comp_rate_cloud}
    r_{c,k} = \frac{f_{c,k}}{\phi}, \forall k \in \mathcal{X}_{\mathrm{CS}} \quad \text{and} \quad r_{c,k} = 0, \forall k \notin \mathcal{X}_{\mathrm{CS}},
\end{equation} 
where $f_{c,k}$ denotes the CPU-cycle frequency allocated to user $k$ at the CS.
Since the BS is normally power constrained, the power consumption for computing at the BS should be considered, which is given by \cite{mao2017survey}
\begin{equation}
    p_b = \sum_{k \in \mathcal{X}_{\mathrm{BS}}} \kappa f_{b,k}^3 = \sum_{k \in \mathcal{X}_{\mathrm{BS}}} \kappa (\phi r_{b,k})^3,
\end{equation}
where $\kappa$ is the power factor related to the CPU architecture of the computation server at the BS. 
Furthermore, due to the causality relationship between computing and task offloading, the computation rate $r_{b,k}$ and $r_{c,k}$ need to satisfy the following causality constraints:
\begin{subequations}
    \begin{align}
        \label{causality_1}
        & r_{b,k} + B R_{d,k} \le B R_{u, k}, \forall k \in \mathcal{K}\\
        \label{causality_2}
        & \sum_{k \in \mathcal{K}}  R_{d,k} \le R_d,\\
        \label{causality_3}
        & r_{c,k} \le B R_{d,k}, \forall k \in \mathcal{K}, 
    \end{align}
\end{subequations}
where $B$ denotes the channel bandwidth and $R_{d,k}$ denotes the offloading rate from the BS to the CS for user $k$. More specifically, \eqref{causality_1} indicates that for user $k$, the computation rate at the BS plus the offloading rate from the BS to the CS cannot exceed the communication rate from user $k$ to the BS. \eqref{causality_2} ensures that the overall offloading rate of $K$ users from the BS to the CS is upper bounded by the available communication rate. \eqref{causality_3} represents that the computation rate for user $k$ at the CS should be smaller than the corresponding computation offloading rate. Generally, the equality in \eqref{causality_3} can be achieved since the CS is normally with sufficient power for supporting high-performance computing. Thus, the causality constraints can be simplified as the following communication-computation causality constraints:
\begin{subequations}  \label{eqn:rate_upper_bound}
    \begin{align} 
        & r_{b,k} + r_{c,k} \le B R_{u, k}, \forall k \in \mathcal{K}, \\
        & \sum_{k \in \mathcal{K}} r_{c,k}\le B R_d.
    \end{align}
\end{subequations}
Then, the overall computation rate is given by 
\begin{equation}
    \tilde{R}= \sum_{k \in \mathcal{K}} \omega_k (r_{b,k} + r_{c,k})
\end{equation}
where $\boldsymbol{\omega}=[\omega_1,...,\omega_k]^T \in \mathbb{R}^{K \times 1}_+ $ is the weights for characterizing the priority of each user. 

Finally, two computation offloading modes are considered for the computation model, namely partial offloading and binary offloading \cite{mao2017survey}, to define the user sets  $\mathcal{X}_{\mathrm{BS}}$ and $\mathcal{X}_{\mathrm{CS}}$. For the partial offloading mode, the task bits can be arbitrarily divided into two groups and executed at the BS and the CS, respectively. In this case, it holds that $\mathcal{X}_{\mathrm{BS}} = \mathcal{X}_{\mathrm{CS}} = \mathcal{K}$. However, in practice, the computation tasks may be highly integrated and cannot be partitioned, for which the binary offloading mode is needed. For the binary offloading mode, the computation task of each user can only be executed as a whole at either the BS or the CS, and it holds that $\mathcal{X}_{\mathrm{BS}} \subseteq \mathcal{K}$ and $\mathcal{X}_{\mathrm{CS}} = \mathcal{K} \backslash \mathcal{X}_{\mathrm{BS}}$.  

\begin{remark} \label{remark_1}
    \emph{
    \emph{(Computing-sensing trade-off)} There is a trade-off between computing and sensing performance in the JCSMC framework. Firstly, for achieving higher sensing SINR, more power needs to be allocated to the beamformer $\mathbf{p}$, resulting in the degradation of the computation rate at the BS because of the less power available for computing. Secondly, similar to pure ISAC systems, more power of the BS transmitter will be concentrated at the angle of the target to achieve better sensing performance. In this case, the BS will transmit less power toward the CS, leading to a lower offloading rate and therefore a lower computation rate.
    }
\end{remark}

\begin{remark} \label{remark_2}
    \emph{
    \emph{(Benefits of NOMA)} The main benefits of NOMA to the JCSMC framework are summarized as follows. Firstly, NOMA is the capacity-achieving scheme for uplink transmission in the multi-antenna systems \cite{tse2005fundamentals}, thus resulting in the maximum computation rate upper bound. Secondly, by exploiting the SIC technique, the effective signal for target sensing is free from the interference of the offloading signals, thus reducing the inter-functionalities interference and achieving a high-quality sensing performance, see \eqref{eqn:effective_sensing_signal}. Thirdly, based on the aforementioned benefits and analysis in \textbf{Remark \ref*{remark_1}}, NOMA is capable of enlarging the computing-sensing trade-off region.
    }
\end{remark}

\section{Partial Computation Offloading} \label{sec:partial_offloading}

In this section, we first focus on the partial computation offloading mode. Our goal is to jointly optimize the computing resource allocation $\{r_{b,k}\}$ and $\{r_{c,k}\}$ as well as the transmit beamformer $\mathbf{p}$ to maximize the computation rate $\tilde{R}$, while satisfying the constraints of communication-computation causality and sensing quality. To solve the resultant non-convex optimization problem, a WMMSE-based algorithm is proposed to obtain a KKT optimal solution.

\subsection{Problem Formulation}
For the partial computation offloading, where $\mathcal{X}_{\mathrm{BS}} = \mathcal{X}_{\mathrm{CS}} = \mathcal{K}$, the computation rate $r_{b,k}$ and $r_{c,k}$ become $r_{b,k} = \frac{f_{b,k}}{\phi}, \forall k \in \mathcal{K}$ and $r_{c,k} = \frac{f_{c,k}}{\phi}, \forall k \in \mathcal{K}$, respectively. Thus, the computation rate maximization problem with a given decoding order $\pi$ can be formulated as follows:
\begin{subequations}
    \begin{align}
        \text{(P1):} \quad \max_{\mathbf{r}_b, \mathbf{r}_c, \mathbf{p}} \quad &  \sum_{k \in \mathcal{K}} \omega_k (r_{b,k} + r_{c,k}) \\
        \label{constraint:causality_1}
        \mathrm{s.t.} \quad & r_{b,k} + r_{c,k} \le B R_{u, k}, \forall k \in \mathcal{K}, \\
        \label{constraint:causality_2}
        & \sum_{k \in \mathcal{K}} r_{c,k}\le B R_d, \\
        \label{constraint:sensing_SINR}
        & \gamma_s \ge \gamma_{s, \min}, \\
        \label{constraint:resource_compute}
        & \sum_{k \in \mathcal{K}} \kappa (\phi r_{b,k})^3 + \mathrm{tr}(\mathbf{p}\mathbf{p}^H) \le P_b,\\
        \label{constraint:positive}
        & r_{b,k} \ge 0, r_{b,k} \ge 0, \forall k \in \mathcal{K},
    \end{align}
\end{subequations}
where $\mathbf{r}_b = [r_{b,1},\dots,r_{b,K}]^T$ and $\mathbf{r}_c = [r_{c,1},\dots,r_{c,K}]^T$ denote the vector of the computing rate at the BS and the CS, respectively, 
and $P_b$ in \eqref{constraint:resource_compute} denotes the total power available at the BS. Constraints \eqref{constraint:causality_1} and \eqref{constraint:causality_2} represent the causality between communication and computation. Constraint \eqref{constraint:sensing_SINR} guarantees the minimum quality of the target sensing. However, (P1) is a highly-coupled non-convex problem due to the constraints \eqref{constraint:causality_1}-\eqref{constraint:sensing_SINR}, which is generally challenging to obtain the globally optimal solution.

\subsection{Equivalent Communication Model for Target Sensing}
In order to tackle the sensing SINR constraint \eqref{constraint:sensing_SINR} in (P1), we first transform it into a homogeneous form as the constraints \eqref{constraint:causality_1} and \eqref{constraint:causality_2} by introducing an equivalent communication model for sensing, which is given by 
\begin{equation}
  \tilde{\mathbf{y}}_s = \mathbf{H}_s \mathbf{p} s_s + \mathbf{z}_s,
\end{equation}
where $s_s \in \mathbb{C}$ is the virtual communication stream and $\mathbf{z}_s \in \mathbb{C}^{N \times 1}$ is the virtual colored noise with the covariance matrix $\mathbf{R}_{\mathbf{c}}$. Following the same path in Section \ref{sec:communication_model}, the achievable communication rate for decoding $s_s$ via the MMSE receiver is given by 
\begin{equation}
  R_s = \log_2 \left(1 + \mathbf{p}^H \mathbf{H}_s^H \mathbf{R}_{\mathbf{c}}^{-1}\mathbf{H}_s\mathbf{p} \right).
\end{equation} 
Then, it follows that $R_s = \log_2(1 + \gamma_s)$. Since the function $\log_2(\cdot)$ is monotonically increasing, problem (P1) can be rewritten as 
\begin{subequations}
  \begin{align}
      \text{(P1.1):} \quad \max_{\mathbf{r}_b, \mathbf{r}_c, \mathbf{p}} \quad &  \sum_{k \in \mathcal{K}} \omega_k (r_{b,k} + r_{c,k}) \\
      \label{constraint:1.1.1}
      \mathrm{s.t.} \quad & r_{b,k} + r_{c,k} \le B R_{u, k}, \forall k \in \mathcal{K}, \\
      \label{constraint:1.1.2}
      & \sum_{k \in \mathcal{K}} r_{c,k}\le B R_d, \\
      \label{constraint:1.1.3}
      & \Gamma_{s, \min} \le R_s, \\
      & \eqref{constraint:resource_compute}, \eqref{constraint:positive},
  \end{align}
\end{subequations}
where $\Gamma_{s, \min} = \log_2(1 + \gamma_{s, \min})$. 

\subsection{WMMSE-based AO Algorithm for Solving (P1.1)} \label{sec:partial_wmmse}
To solve the non-convex constraints \eqref{constraint:1.1.1}-\eqref{constraint:1.1.3}, the WMMSE framework is invoked. Denote $\tilde{\mathbf{w}}_{u,k}$, $\tilde{w}_d$, and $\tilde{\mathbf{w}}_s$ as the receiver for decoding the desired stream from the signals $\mathbf{y}_{u,k}$, $y_d$, and $\tilde{\mathbf{y}}_s$. Then, the mean square error (MSE) for decoding the desired streams with these receivers is given by 
\begin{align} 
  \label{eqn:up_MSE}
  &e_{u,k} = \mathbb{E} \left[| \tilde{\mathbf{w}}_{u,k}^H \mathbf{y}_{u,k} - s_{u,k} |^2\right] \nonumber \\
  = &\tilde{\mathbf{w}}_{u,k}^H \! \left( P_t\mathbf{h}_{u,k} \mathbf{h}_{u,k}^H \!+\! \mathbf{R}_{\tilde{\mathbf{n}}_k} \right) \!\tilde{\mathbf{w}}_{u,k} \!-\! 2\mathrm{Re}\{\sqrt{P_t} \tilde{\mathbf{w}}_{u,k}^H \!\mathbf{h}_{u,k}\} \!+\! 1, \\
  \label{eqn:down_MSE}
  &e_d= \mathbb{E}\left[| \tilde{w}_d^* y_d - s_d |^2\right] \nonumber \\
  = &|\tilde{w}_d|^2\left( |\mathbf{h}_d^H \mathbf{p}|^2 + \sigma_d^2 \right) - 2 \mathrm{Re} \{ \tilde{w}_d^* \mathbf{h}_d^H \mathbf{p}  \} + 1,\\
  \label{eqn:sensing_MSE}
  &e_s = \mathbb{E} \left[ | \tilde{\mathbf{w}}_s^H \tilde{\mathbf{y}}_s - s_s |^2 \right] \nonumber \\
  = &\tilde{\mathbf{w}}_s^H \left( \mathbf{H}_s\mathbf{p} \mathbf{p}^H \mathbf{H}_s^H + \mathbf{R}_{\mathbf{c}} \right)\tilde{\mathbf{w}}_s - 2 \mathrm{Re} \{\tilde{\mathbf{w}}_s^H \mathbf{H}_s\mathbf{p}\} + 1.
\end{align} 
Then, the optimal receiver $\tilde{\mathbf{w}}_{u,k}$ for minimizing the MSE $e_{u,k}$ is given by \cite[Chapter 8]{tse2005fundamentals}
\begin{equation} \label{eqn:mmse_seceiver_1}
  \tilde{\mathbf{w}}_{u,k}^{\text{MMSE}} = \arg \min_{\tilde{\mathbf{w}}_{u,k}} e_{u,k} = \sqrt{P_t}\left( P_t \mathbf{h}_k \mathbf{h}^H_k + \mathbf{R}_{\tilde{\mathbf{n}}_k}\right)^{-1} \mathbf{h}_k,
\end{equation}
which can be obtained by solving $\nabla_{\tilde{\mathbf{w}}_{u,k}} e_{u,k} = 0$. Similarly, the optimal receivers $\tilde{w}_d$ and $\tilde{\mathbf{w}}_s$ for minimizing the MSEs $e_d$ and $e_s$ are given by 
\begin{align}
  \label{eqn:mmse_seceiver_2}
  &\tilde{w}_d^{\text{MMSE}} = \left(|\mathbf{h}_d^H \mathbf{p}|^2 + \sigma_d^2 \right)^{-1} \mathbf{h}_d^H \mathbf{p},\\
  \label{eqn:mmse_seceiver_3}
  &\tilde{\mathbf{w}}_s^{\text{MMSE}} = \left( \mathbf{H}_s\mathbf{p} \mathbf{p}^H \mathbf{H}_s^H + \mathbf{R}_{\mathbf{c}} \right)^{-1} \mathbf{H}_s\mathbf{p},
\end{align} 
By substituting the optimal receivers $\tilde{\mathbf{w}}_{u,k}^{\text{MMSE}}$, $\tilde{w}_d^{\text{MMSE}}$, and $\tilde{\mathbf{w}}_s^{\text{MMSE}}$ into \eqref{eqn:up_MSE}, \eqref{eqn:down_MSE}, and \eqref{eqn:sensing_MSE}, respectively, the corresponding minimum MSEs can be obtained as follows:
\begin{align}
  e^{\text{MMSE}}_{u,k} &= 1 - P_t \mathbf{h}_{u,k}^H (P_t \mathbf{h}_{u,k} \mathbf{h}_{u,k}^H  + \mathbf{R}_{\tilde{\mathbf{n}}_k})^{-1} \mathbf{h}_{u,k} \nonumber\\
  &\overset{(a)}{=} \left(1 + P_t \mathbf{h}_{u,k}^H \mathbf{R}_{\tilde{\mathbf{n}}_k}^{-1} \mathbf{h}_{u,k} \right)^{-1}, \\
  e^{\text{MMSE}}_d &= 1 - \left(|\mathbf{h}_d^H \mathbf{p}|^2 + \sigma_d^2 \right)^{-1} |\mathbf{h}_d^H \mathbf{p}|^2 \nonumber\\
  &= \left( 1 + |\mathbf{h}_d^H \mathbf{p}|^2/\sigma_d^2 \right)^{-1},\\
  e^{\text{MMSE}}_s &= 1 - \mathbf{p}^H\mathbf{H}_s^H \left( \mathbf{H}_s\mathbf{p} \mathbf{p}^H \mathbf{H}_s^H + \mathbf{R}_{\mathbf{c}} \right)^{-1} \mathbf{H}_s\mathbf{p} \nonumber \\
  &\overset{(b)}{=} \left( 1 + \mathbf{p}^H\mathbf{H}_s^H \mathbf{R}_{\mathbf{c}}^{-1} \mathbf{H}_s\mathbf{p} \right)^{-1}, 
\end{align}
where $(a)$ and $(b)$ are obtained by applying the Woodbury matrix identity \cite{higham2002accuracy}. Therefore, the communication rate in the constraints \eqref{constraint:1.1.1}-\eqref{constraint:1.1.3} can be rewritten as $R_{u,k} = -\log_2 \left(e^{\text{MMSE}}_{u,k}\right)$, $R_d = -\log_2 \left( e^{\text{MMSE}}_d \right)$, and $R_s = -\log_2 \left( e^{\text{MMSE}}_s \right)$. Next, we introduce the positive weights $\varpi_{u,k}$, $\varpi_d$, and $\varpi_s$, and define the augmented weighted MSEs (AWMSEs) as $\varepsilon_{u,k} = \varpi_{u,k} e_{u,k} - \log_2 \varpi_{u,k}$, $\varepsilon_d = \varpi_d e_d - \log_2\varpi_d$, and $\varepsilon_s = \varpi_s e_s - \log_2 \varpi_s$.    
Based on the AWMSEs, the following rate-WMMSE relationships can be established: 
\begin{align}
  \label{eqn:comm_rate_wmmse}
  &\varepsilon_{u,k}^{\text{MMSE}} = \min_{\varpi_{u,k}, \tilde{\mathbf{w}}_{u,k}} \varepsilon_{u,k} = c - R_{u,k},\\
  &\varepsilon_d^{\text{MMSE}} = \min_{\varpi_d, \tilde{w}_d} \varepsilon_d = c - R_d,\\
  &\varepsilon_s^{\text{MMSE}} = \min_{\varpi_s, \tilde{\mathbf{w}}_s} \varepsilon_s = c - R_s,
\end{align}
where $c = 1/\ln 2 + \log_2 \ln2$. The above relationships can be obtained by substituting the optimal receivers and weights back to the AWMSEs. For instance, the optimal receiver $\tilde{\mathbf{w}}_{u,k}$ for \eqref{eqn:comm_rate_wmmse} is $\tilde{\mathbf{w}}_{u,k}^\star = \tilde{\mathbf{w}}_{u,k}^{\text{MMSE}}$, which can be derived by $\nabla_{\tilde{\mathbf{w}}_{u,k}} \varepsilon_{u,k} = 0$. Given the optimal receiver $\tilde{\mathbf{w}}_{u,k}^\star$, the corresponding optimal weight $\varpi_{u,k}$ can be calculated by $\frac{\partial \varepsilon_k(\tilde{\mathbf{w}}_{u,k}^\star)}{\partial \varpi_{u,k}}=0$, yielding
\begin{equation} \label{eqn:mmse_weight_1}
  \varpi_{u,k}^\star = \varpi_{u,k}^{\text{MMSE}} \triangleq \left(e^{\text{MMSE}}_{u,k} \ln 2 \right)^{-1}.
\end{equation}
Similarly, the optimal receivers 
$\tilde{w}_d$ and $\tilde{\mathbf{w}}_s$ are $\tilde{w}_d^\star = \tilde{w}_d^{\text{MMSE}}$ and $\tilde{\mathbf{w}}_s^\star = \tilde{\mathbf{w}}_s^{\text{MMSE}}$, respectively. The corresponding optimal weights are given by 
\begin{align}
  \label{eqn:mmse_weight_2}
  &\varpi_d^\star = \varpi_d^{\text{MMSE}} \triangleq \left(e^{\text{MMSE}}_d \ln 2 \right)^{-1}, \\
  \label{eqn:mmse_weight_3}
  &\varpi_s^\star = \varpi_s^{\text{MMSE}} \triangleq \left( e_s^{\text{MMSE}} \ln 2 \right)^{-1}.
\end{align}
By applying the rate-WMMSE relationships, we can reformulate the problem (P1.1) as follows:
\begin{subequations}
  \begin{align}
      \text{(P1.2):} \quad &\max_{ \scriptstyle \mathbf{r}_b, \mathbf{r}_c, \mathbf{p}, \atop \scriptstyle \boldsymbol{\varpi}, \tilde{\mathbf{w}}} \quad   \sum_{k \in \mathcal{K}} \omega_k (r_{b,k} + r_{c,k}) \\
      \label{constraint:up_rate_mmse}
      \mathrm{s.t.} \quad & r_{b,k} + r_{c,k} \le B c - B \varepsilon_{u,k}, \forall k \in \mathcal{K}, \\
      \label{constraint:down_rate_mmse}
      & \sum_{k \in \mathcal{K}} r_{c,k}\le B c - B \varepsilon_d, \\
      \label{constraint:sensing_mmse}
      & \Gamma_{r,\min} \le c - \varepsilon_s, \\
      & \eqref{constraint:resource_compute},\eqref{constraint:positive},
  \end{align}
\end{subequations}
where $\boldsymbol{\varpi} = [\varpi_{u,1},\dots,\varpi_{u,k}, \varpi_s, \varpi_d]^T$ denotes the vector of weights of AWMSEs and $\tilde{\mathbf{w}} = [\tilde{\mathbf{w}}_{u,1}^T,$ $\dots,\tilde{\mathbf{w}}_{u,K}^T, \tilde{\mathbf{w}}_s^T, \tilde{w}_d]^T$ denotes the vector of receivers. Regarding the optimality relationship between problems (P1.1) and (P1.2), we have the following proposition.

\begin{proposition} \label{proposition_1}
  \emph{
  Suppose $\{\boldsymbol{\varpi}^\star, \tilde{\mathbf{w}}^\star, \mathbf{r}_b^\star, \mathbf{r}_c^\star, \mathbf{p}^\star\}$ is a KKT optimal solution to problem (P1.2). If the optimal $\boldsymbol{\varpi}^\star$ and $\tilde{\mathbf{w}}^\star$ satisfy
  \begin{align}
    \label{condition:weight}
    &\boldsymbol{\varpi}^\star = \boldsymbol{\varpi}^{\text{MMSE}}(\mathbf{p}^\star),\\
    \label{condition:receiver}
    &\tilde{\mathbf{w}}^\star = \tilde{\mathbf{w}}^{\text{MMSE}}(\mathbf{p}^\star),
  \end{align}
  where $\boldsymbol{\varpi}^{\text{MMSE}} = [\varpi_{u,1}^{\text{MMSE}},\dots,\varpi_{u,k}^{\text{MMSE}}, \varpi_s^{\text{MMSE}}, \varpi_d^{\text{MMSE}}]^T$ and $\tilde{\mathbf{w}}^{\text{MMSE}} = [(\tilde{\mathbf{w}}_{u,1}^{\text{MMSE}})^T, \dots,(\tilde{\mathbf{w}}_{u,K}^{\text{MMSE}})^T,$ $(\tilde{\mathbf{w}}_s^{\text{MMSE}})^T, \tilde{w}_d^{\text{MMSE}}]^T$, then $\{\mathbf{r}_b^\star, \mathbf{r}_c^\star, \mathbf{p}^\star\}$ is a KKT optimal solution to problem (P1.1). 
  }
\end{proposition}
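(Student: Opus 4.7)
The plan is to verify that each KKT condition of (P1.1) is implied by the corresponding KKT condition of (P1.2) once the identifications \eqref{condition:weight}-\eqref{condition:receiver} are imposed. I would begin by writing out both Lagrangians using the same dual variables $\{\mu_k\}$ for \eqref{constraint:up_rate_mmse}, $\nu$ for \eqref{constraint:down_rate_mmse}, $\lambda$ for \eqref{constraint:sensing_mmse}, $\eta$ for \eqref{constraint:resource_compute}, and $\{\xi_k,\zeta_k\}$ for \eqref{constraint:positive}. The only structural difference between the two Lagrangians is that the rates $R_{u,k}, R_d, R_s$ appearing in (P1.1) are replaced by $c-\varepsilon_{u,k},\, c-\varepsilon_d,\, c-\varepsilon_s$ in (P1.2). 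Primal feasibility of $\{\mathbf{r}_b^\star,\mathbf{r}_c^\star,\mathbf{p}^\star\}$ in (P1.1) and complementary slackness of the shared constraints then follow immediately from the rate-WMMSE identities $\varepsilon^{\text{MMSE}}(\mathbf{p}^\star) = c - R(\mathbf{p}^\star)$, while the power budget, causality-sign, and nonnegativity constraints are shared verbatim between the two problems.

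The substantive step is the $\mathbf{p}$-stationarity condition. I would invoke the envelope theorem: since $(\boldsymbol{\varpi}^{\text{MMSE}}(\mathbf{p}),\tilde{\mathbf{w}}^{\text{MMSE}}(\mathbf{p}))$ are unconstrained minimizers of the (jointly convex) AWMSEs for each fixed $\mathbf{p}$, the partial $\mathbf{p}$-derivative of each $\varepsilon$ evaluated at its MMSE value coincides with the total derivative of the reduced function $\varepsilon^{\text{MMSE}}(\mathbf{p})$. Combining this with $\varepsilon^{\text{MMSE}}=c-R$ yields $\nabla_{\mathbf{p}}\varepsilon_{u,k}\big|_{\boldsymbol{\varpi}^\star,\tilde{\mathbf{w}}^\star} = -\nabla_{\mathbf{p}} R_{u,k}(\mathbf{p}^\star)$ and analogous identities for $\varepsilon_d$ and $\varepsilon_s$. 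Substituting these three identities into the $\mathbf{p}$-stationarity equation of (P1.2) produces, term by term and with signs matched (each $\nabla_{\mathbf{p}}\varepsilon$ carries a minus relative to the original $\nabla_{\mathbf{p}} R$ because the upper-bound constraint has been rewritten with $\varepsilon$ on the right-hand side), exactly the $\mathbf{p}$-stationarity equation of (P1.1) with the same multipliers. The stationarity conditions with respect to $\mathbf{r}_b$ and $\mathbf{r}_c$ transfer trivially, since these variables enter both Lagrangians through identical linear and cubic terms, and dual feasibility $\mu_k,\nu,\lambda,\eta \ge 0$ is inherited from (P1.2) directly.

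The step I expect to be most delicate is the envelope argument, because it hinges on $\nabla_{\tilde{\mathbf{w}}}\varepsilon = 0$ and $\partial_{\varpi}\varepsilon = 0$ holding at $(\boldsymbol{\varpi}^{\text{MMSE}},\tilde{\mathbf{w}}^{\text{MMSE}})$. These are precisely the KKT stationarity conditions of (P1.2) with respect to $\boldsymbol{\varpi}$ and $\tilde{\mathbf{w}}$ (none of which are coupled with a primal dual variable, since the auxiliary variables appear only inside constraint functions multiplied by $\mu_k,\nu,\lambda$, and at the MMSE point these gradients vanish), and I would verify them by reproducing the closed forms \eqref{eqn:mmse_seceiver_1}-\eqref{eqn:mmse_seceiver_3} and \eqref{eqn:mmse_weight_1}-\eqref{eqn:mmse_weight_3} as the unique roots of the corresponding gradient equations. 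Once these first-order relations are in place, the chain-rule identity $\nabla_{\mathbf{p}}\varepsilon(\boldsymbol{\varpi}^{\text{MMSE}}(\mathbf{p}),\tilde{\mathbf{w}}^{\text{MMSE}}(\mathbf{p}),\mathbf{p}) = \nabla_{\mathbf{p}}\varepsilon\big|_{\boldsymbol{\varpi}^{\text{MMSE}},\tilde{\mathbf{w}}^{\text{MMSE}}}$ holds, the signs line up, and the two KKT systems coincide, completing the argument.
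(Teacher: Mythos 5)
Your proposal is correct and follows essentially the same route as the paper's Appendix~A: both arguments match the two Lagrangians term by term, transfer feasibility and complementary slackness via the identity $\varepsilon^{\text{MMSE}}(\mathbf{p}) = c - R(\mathbf{p})$, and establish $\nabla_{\mathbf{p}}\varepsilon\big|_{\boldsymbol{\varpi}^{\text{MMSE}},\tilde{\mathbf{w}}^{\text{MMSE}}} = -\nabla_{\mathbf{p}}R$ by a first-order sensitivity argument (you call it the envelope theorem, the paper invokes Danskin's theorem — the same mechanism, justified exactly as you say by the vanishing of the $\boldsymbol{\varpi}$- and $\tilde{\mathbf{w}}$-gradients of the AWMSEs at their minimizers). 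No gaps.
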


\begin{proof}
  Please refer to Appendix A.
\end{proof}

According to \textbf{Proposition \ref{proposition_1}}, a KKT optimal solution to problem (P1.1) can be obtained by finding a KKT optimal solution to problem (P1.2) that satisfies \eqref{condition:weight} and \eqref{condition:receiver}. It can be observed that although the optimization variables are still highly coupled in the constraints \eqref{constraint:up_rate_mmse}-\eqref{constraint:sensing_mmse}, (P1.2) is block-wise convex with respect to $\boldsymbol{\varpi}$, $\tilde{\mathbf{w}}$, and $\{\mathbf{r}_b, \mathbf{r}_c, \mathbf{p}\}$, which can be efficiently solved by invoking the alternating optimization (AO). The details of the AO algorithm are summarized in \textbf{Algorithm \ref{alg:A}}, and its optimality is given in the following proposition.

\begin{algorithm}[htb]
  \caption{WMMSE-based AO algorithm.}
  \label{alg:A}
  \begin{algorithmic}[1]
    \REQUIRE Convergence criteria and $\mathbf{p}^{[0]}$.
    \ENSURE $\mathbf{r}_b^\star$, $\mathbf{r}_c^\star$, $\mathbf{p}^\star$.
        \STATE{$n \coloneqq 0$.}
        \WHILE{\emph{$\tilde{R}$ not converged}}
          \STATE{update $\boldsymbol{\varpi}^{[n+1]}$ as $\boldsymbol{\varpi}^{\text{MMSE}}(\mathbf{p}^{[n]})$.}
          \STATE{update $\tilde{\mathbf{w}}^{[n+1]}$ as $\tilde{\mathbf{w}}^{\text{MMSE}}(\mathbf{p}^{[n]})$.}
          \STATE{update $\{\mathbf{r}_b^{[n+1]}, \mathbf{r}_c^{[n+1]}, \mathbf{p}^{[n+1]}\}$ by solving (P1.2) with $\boldsymbol{\varpi}^{[n+1]}$ and $\tilde{\mathbf{w}}^{[n+1]}$.}
          \STATE{$n \coloneqq n+1$.}
        \ENDWHILE
        \STATE{\textbf{return} $\mathbf{r}_b^\star = \mathbf{r}^{[n]}$, $\mathbf{r}_c^\star = \mathbf{r}^{[n]}$, and $\mathbf{p}^\star = \mathbf{p}^{[n]}$.}
    \end{algorithmic}
\end{algorithm}

\begin{proposition}\label{proposition_2}
  \emph{
    \textbf{Algorithm \ref{alg:A}} has a guaranteed convergence to the KKT optimal solutions to (P1.2) that satisfies \eqref{condition:weight} and \eqref{condition:receiver}.
  }
\end{proposition}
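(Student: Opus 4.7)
The plan is to establish two things: first, that the sequence of objective values $\{\tilde{R}^{[n]}\}$ produced by Algorithm~\ref{alg:A} is non-decreasing and bounded above, and second, that every limit point of the iterates is a KKT point of (P1.2) satisfying \eqref{condition:weight} and \eqref{condition:receiver}.

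For monotonicity, the key observation is that steps~3 and~4 do not alter $\{\mathbf{r}_b,\mathbf{r}_c,\mathbf{p}\}$ and so leave the objective unchanged. However, by the rate-WMMSE identities and the MMSE optimality of $\boldsymbol{\varpi}^{\text{MMSE}}(\mathbf{p}^{[n]})$ and $\tilde{\mathbf{w}}^{\text{MMSE}}(\mathbf{p}^{[n]})$, after these two updates the right-hand sides of \eqref{constraint:up_rate_mmse}, \eqref{constraint:down_rate_mmse} and \eqref{constraint:sensing_mmse} attain their maximal values $BR_{u,k}(\mathbf{p}^{[n]})$, $BR_d(\mathbf{p}^{[n]})$ and $R_s(\mathbf{p}^{[n]})$, respectively. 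Hence the previous iterate remains feasible and step~5 maximizes the objective over a superset of the previous feasible region, giving $\tilde{R}^{[n+1]} \ge \tilde{R}^{[n]}$. Boundedness from above follows from the power budget \eqref{constraint:resource_compute}, which bounds all rates appearing in the causality constraints, so the monotone sequence converges.

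For the KKT property of a limit point $\{\boldsymbol{\varpi}^{\infty},\tilde{\mathbf{w}}^{\infty},\mathbf{r}_b^{\infty},\mathbf{r}_c^{\infty},\mathbf{p}^{\infty}\}$, I would combine the three per-block optimality conditions. Continuity of the MMSE maps in $\mathbf{p}$ yields \eqref{condition:weight} and \eqref{condition:receiver} in the limit, and these coincide with $\partial\varepsilon/\partial\boldsymbol{\varpi}=0$ and $\nabla_{\tilde{\mathbf{w}}}\varepsilon=0$; since $\boldsymbol{\varpi}$ and $\tilde{\mathbf{w}}$ enter (P1.2) only through the AWMSEs in the constraints, these are precisely the stationarity conditions of (P1.2) with respect to those variables, with the associated multipliers being the nonnegative Lagrange multipliers on the causality and sensing constraints. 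For the remaining block, (P1.2) is convex in $\{\mathbf{r}_b,\mathbf{r}_c,\mathbf{p}\}$ with $\boldsymbol{\varpi},\tilde{\mathbf{w}}$ fixed, so step~5 produces a point satisfying the KKT conditions in these variables under a standard constraint qualification (Slater). Assembling the three sets of stationarity equations together with the unchanged primal feasibility, dual feasibility and complementary slackness delivers the full KKT system of (P1.2).

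The main obstacle I foresee is verifying that the stationarity in $\mathbf{p}$ obtained from step~5 is consistent with the MMSE updates in steps~3 and~4 that the limit must simultaneously satisfy. This reduces to an envelope-type identity familiar from the WMMSE literature: at the MMSE minimizers one has $\nabla_{\mathbf{p}}\varepsilon = -\nabla_{\mathbf{p}}R/\ln 2$, so the gradient of the Lagrangian of (P1.2) in $\mathbf{p}$ at the limit matches, up to a positive rescaling of the multipliers, the gradient appearing in the KKT conditions of (P1.1) invoked by Proposition~\ref{proposition_1}. Establishing this identity explicitly, together with existence of a convergent subsequence from boundedness of the iterates (the beamformer lies in a compact set due to \eqref{constraint:resource_compute}, and the MMSE weights/receivers are continuous functions of $\mathbf{p}$), completes the proof.
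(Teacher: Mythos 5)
Your proposal follows essentially the same route as the paper's Appendix~B: monotone non-decrease of the objective via feasibility of the previous iterate under the updated MMSE weights/receivers, boundedness from the power budget, continuity of the MMSE maps to obtain \eqref{condition:weight} and \eqref{condition:receiver} at the cluster point, and assembly of the full KKT system through the convex subproblem in $\{\mathbf{r}_b,\mathbf{r}_c,\mathbf{p}\}$ together with the Danskin-type gradient identity. Two minor imprecisions that do not affect the argument: the updated feasible region is not a superset of the previous one (only the previous optimizer is guaranteed to remain feasible, which is what you also state and is all that is needed), and the envelope identity is $\nabla_{\mathbf{p}}\varepsilon=-\nabla_{\mathbf{p}}R$ with no $1/\ln 2$ factor.
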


\begin{proof}
   Please refer to Appendix B.
\end{proof}

According to \textbf{Propositions \ref{proposition_1}} and \textbf{\ref{proposition_2}}, by solving (P1.2) through \textbf{Algorithm \ref{alg:A}}, the KKT optimal solutions to problem (P1.1) can be obtained. However, due to the sensing SINR constraint \eqref{constraint:sensing_mmse} in problem (P1.2), the initialized $\mathbf{p}^{[0]}$ may lead to infeasibility and thus to the failure of \textbf{Algorithm \ref{alg:A}}. To reduce the infeasibility probability as much as possible, we initialize $\mathbf{p}^{[0]}$ such that the main lobe of transmit beampattern towards the desired target direction $\theta_0$ and takes up all the available power, i.e.,
\begin{equation} \label{initial_p}
  \mathbf{p}^{[0]} = \frac{\sqrt{P_b} \mathbf{a}(\theta_0)}{\|\mathbf{a}(\theta_0)\|_2}.
\end{equation}
It should be pointed out that the optimal method to obtain $\mathbf{p}^{[0]}$ for avoiding infeasibility is to find the optimal $\mathbf{p}$ such that $\gamma_s$ is maximized. However, maximizing $\gamma_s$ with respect to $\mathbf{p}$ is a non-convex problem, which is hard to obtain the optimal solution. Nevertheless, the closed-form $\mathbf{p}^{[0]}$ given in \eqref{initial_p} can maximize the power in the target direction and achieve low power leakage in the other directions. In this case, the $\gamma_s$ achieved by the closed-form $\mathbf{p}^{[0]}$ will be very close to the maximum $\gamma_s$ in general. The complexity analysis of \textbf{Algorithm \ref{alg:A}} is given as follows. The complexities for updating $\boldsymbol{\varpi}$ and $\tilde{\mathbf{w}}$ are both $\mathcal{O}(K^2N^3)$, where $\mathcal{O}(\cdot)$ is the big-O notation. Then, updating $\{\mathbf{r}_b, \mathbf{r}_c, \mathbf{p}\}$ needs to solve the problem (P1.2) with the fixed $\boldsymbol{\varpi}$ and $\tilde{\mathbf{w}}$, which is a quadratically constrained quadratic program (QCQP) and thus has the complexity of $\mathcal{O}( K^{1/2} (3K+N)(2K+N)^2 )$ by applying the interior-point method \cite{nesterov1994interior}. Thus, the overall complexity of \textbf{Algorithm \ref{alg:A}} is $\mathcal{O}(I_{\mathrm{ite}}(2K^2N^3 + K^{1/2} (3K+N)(2K+N)^2))$, where $I_{\mathrm{ite}}$ denotes the number of iterations.

\subsection{Proposed Uplink NOMA Decoding Order of Computation Users} \label{sec:order}
In the above section, problem (P1.1) is solved with a given decoding order of NOMA users. A straightforward method to obtain the optimal decoding order is the exhaustive search 
which, however, has a prohibitive complexity of $\mathcal{O}(K!)$. As a remedy, we propose a user ordering scheme based on the weights of the computation rate and the large-scale fading. Note that the computation rate of user $k$ is upper bounded by the achievable rate $R_{u,k}$, which indicates that the user with higher weight requires higher $R_{u,k}$. To this end, the decoding order is first designed based on the weights of users as follows:
\begin{equation}
  \pi(i) < \pi(j), \quad \text{if } \omega_i < \omega_j, \quad \forall i,j \in \mathcal{K}.
\end{equation}
Based on this decoding order, the users with the higher weight suffer less interference from other users, leading to the higher $R_{u,k}$. Next, for users with the same weights, the decoding order is further determined by the path loss, which is given by 
\begin{equation}
  \pi(i) < \pi(j), \quad \text{if } \omega_i = \omega_j \text{ and } L_{u,i} \ge L_{u,j}, \quad \forall i,j \in \mathcal{K}.
\end{equation}
The effectiveness of the proposed user ordering scheme will be verified by the numerical results in Section \ref{sec:results}.
\section{Binary Computation Offloading} \label{sec:binary_offloading}
In this section, the binary offloading mode is considered, where the binary offloading decision is jointly optimized with the computing resource allocation $\{r_{b,k}\}$ and $\{r_{c,k}\}$ and the transmit beamformer $\mathbf{p}$ for maximizing the computation rate $\tilde{R}$. We propose an ADMM-based AO algorithm to address the coupled optimization variables and the binary integer constraint.

\subsection{Problem Formulation}
For the binary computation offloading, where $\mathcal{X}_{\mathrm{BS}} \subseteq \mathcal{K}$ and $\mathcal{X}_{\mathrm{CS}} = \mathcal{K} \backslash \mathcal{X}_{\mathrm{BS}}$, the computation rate maximization problem with a given decoding order $\pi$ can be formulated as follows:
\begin{subequations}
    \begin{align}
        \text{(P2):} \quad \max_{\mathcal{X}_{\mathrm{BS}}, \mathbf{r}_b, \mathbf{r}_c, \mathbf{p}} \quad &  \sum_{k \in \mathcal{X}_{\mathrm{BS}}} \omega_k r_{b,k} + \sum_{k \in \mathcal{X}_{\mathrm{CS}}} \omega_k r_{c,k} \\
        \label{constraint:set}
        \mathrm{s.t.} \quad & \mathcal{X}_{\mathrm{BS}} \subseteq \mathcal{K}, \mathcal{X}_{\mathrm{CS}} = \mathcal{K} \backslash \mathcal{X}_{\mathrm{BS}}, \\
        & \eqref{constraint:causality_1} - \eqref{constraint:positive}.
    \end{align}
\end{subequations}
It can be observed that for the fixed $\mathcal{X}_{\mathrm{BS}}$, problem (P2) is equivalent to problem (P1) with additional constraints $r_{b,k} = 0, \forall k \in \mathcal{X}_{\mathrm{CS}}$ and $r_{c,k} = 0, \forall k \in \mathcal{X}_{\mathrm{BS}}$, which can be solved by the proposed \textbf{Algorithm \ref{alg:A}}. Then, the optimal $\mathcal{X}_{\mathrm{BS}}$ can be obtained by exhaustively searching all the subsets of $\mathcal{K}$. However, the exhaustive search requires a prohibitive complexity of $\mathcal{O}(2^{K})$. As a remedy, we propose an ADMM-based AO algorithm with much lower complexity to solve the problem (P2) in the following subsections.

\subsection{Reformulation of Problem (P2)}
Before introducing the ADMM-based AO algorithm, we first reformulate problem (P2) to an equivalent but more mathematically tractable form. To this end, we introduce a set of binary decision variables $\mathbf{m} = [m_1,\dots,m_K]^T$ and reformulate problem (P2) as follows:
\begin{subequations}
    \begin{align}
        \text{(P2.1):} \quad &\max_{\mathbf{m}, \mathbf{r}_b, \mathbf{r}_c, \mathbf{p}} \quad \sum_{k \in \mathcal{K}} \omega_k \big( m_k r_{b,k} + (1-m_k) r_{c,k} \big) \\
        \label{constraint:coupled_1}
        \mathrm{s.t.} \quad & m_k r_{b,k} + (1-m_k)r_{c,k} \le B R_{u, k}, \forall k \in \mathcal{K}, \\
        \label{constraint:coupled_2}
        & \sum_{k \in \mathcal{K}} (1-m_k) r_{c,k}\le B R_d, \\
        \label{constraint:coupled_3}
        & \sum_{k \in \mathcal{K}} \kappa (\phi m_k r_{b,k})^3+ \mathrm{tr}(\mathbf{p}\mathbf{p}^H) \le P_b,\\
        \label{constraint:ingeter}
        & m_k \in \{0, 1\}, \forall k \in \mathcal{K},\\
        & \eqref{constraint:sensing_SINR},\eqref{constraint:positive},
    \end{align}
\end{subequations}
where $m_k=1$ for $k \in \mathcal{X}_{\mathrm{BS}}$ and $m_k=0$ for $k \in \mathcal{X}_{\mathrm{CS}}$. Problem (P2.1) is a non-convex mixed-integer optimization problem, the main challenge for which lies in the binary constraint \eqref{constraint:ingeter} and the coupling between $\{\mathbf{r}_b, \mathbf{r}_c\}$ and $\mathbf{m}$. To handle these obstacles, we introduce another set of variables $\mathbf{z}_b = [z_{b,1},\dots,z_{b,K}]^T$ and $\mathbf{z}_c = [z_{c,1},\dots,z_{c,K}]^T$ such that $z_{b,k} = m_k r_{b,k}$ and $z_{c,k} = (1-m_k) r_{c,k}$. Next, we transform the binary constraint \eqref{constraint:ingeter} into a series of equivalent equality constraints as follows:
\begin{equation}
  m_k = \tilde{m}_k, \quad m_k(1 - \tilde{m}_k) = 0, \quad 0 \le m_k \le 1, \quad \forall k \in \mathcal{K}.
\end{equation} 
It can be easily verified that the $m_k$ satisfying the above constraint must be either $0$ or $1$. Then, following the same path as Section \ref{sec:partial_wmmse} and applying the rate-WMMSE relationships, problem (P2.1) can be equivalently reformulated as follows:
\begin{subequations}
  \begin{align}
      \text{(P2.2):} \quad &\max_{\boldsymbol{\theta}} \quad  
      f(\mathbf{z}_b, \mathbf{z}_c) = \sum_{k \in \mathcal{K}} \omega_k \big( z_{b,k} + z_{c,k} \big) \\
      \label{constraint:2.2_start}
      \mathrm{s.t.} \quad & z_{b,k} + z_{c,k} \le B c - B \varepsilon_{u,k}, \forall k \in \mathcal{K}, \\
      & \sum_{k \in \mathcal{K}} z_{c,k} \le B c - B \varepsilon_d, \\
      & \Gamma_{r,\min} \le c - \varepsilon_r,\\
      & \sum_{k \in \mathcal{K}} \kappa (\phi z_{b,k})^3 + \mathrm{tr}(\mathbf{p}\mathbf{p}^H) \le P_b,\\
      & z_{b,k} \ge 0, z_{c,k} \ge 0, \forall k \in \mathcal{K}, \\
      \label{constraint:2.2_end}
      & 0 \le m_k \le 1, \forall k \in \mathcal{K},\\
      \label{constraint:2.3.5}
      & z_{b,k} = m_k r_{b,k}, z_{c,k} = (1-m_k) r_{c,k}, \forall k \in \mathcal{K},\\
      \label{constraint:2.3.6}
      &m_k = \tilde{m}_k, m_k(1-\tilde{m}_k) = 0, \forall k \in \mathcal{K},
  \end{align}
\end{subequations}
where $\boldsymbol{\theta} = \{\boldsymbol{\varpi}, \tilde{\mathbf{w}}, \mathbf{r}_b, \mathbf{r}_c, \tilde{\mathbf{m}}, \mathbf{z}_b, \mathbf{z}_c, \mathbf{p}, \mathbf{m}\}$ denotes all the optimization variables of (P2.2). In essence, (P2.2) is equivalent to (P1.2) with the additional highly coupled equality constraints. Thus, in the following subsection, we propose to solve it by invoking the ADMM optimization framework.

\subsection{ADMM-based AO Algorithm for Solving (P2.2)}
To accommodate the ADMM optimization framework, we equivalently transform (P2.2) into the following form:
\begin{subequations}
  \begin{align}
      \text{(P2.3):} \quad \min_{\boldsymbol{\theta}} \quad & 
      -f(\mathbf{z}_b, \mathbf{z}_c) + g(\boldsymbol{\theta}) \\
      \label{constraint:2.3.1}
      \mathrm{s.t.} \quad & \mathbf{z}_b = \mathbf{M} \mathbf{r}_b, \quad \mathbf{z}_c = (\mathbf{I} - \mathbf{M}) \mathbf{r}_c,\\
      \label{constraint:2.3.2}
      & \mathbf{m} = \tilde{\mathbf{m}}, \quad \mathbf{M}(\mathbf{1}-\tilde{\mathbf{m}}) = \mathbf{0},
  \end{align}
\end{subequations}
where $\mathbf{M} = \mathrm{diag}(\mathbf{m})$ and $g(\boldsymbol{\theta})$ is an indicator function for the constraints \eqref{constraint:2.2_start}-\eqref{constraint:2.2_end}. Specifically, if $\boldsymbol{\theta}$ falls in the feasible region defined by \eqref{constraint:2.2_start}-\eqref{constraint:2.2_end}, then $g(\boldsymbol{\theta}) = 0$, otherwise $g(\boldsymbol{\theta}) = +\infty$. Constraints \eqref{constraint:2.3.1} and \eqref{constraint:2.3.2} are transformed from \eqref{constraint:2.3.5} and \eqref{constraint:2.3.6}, respectively. Although problem (P2.3) is non-convex, the effectiveness of the ADMM framework to obtain the locally optimal solutions to the non-convex problems has been proved in \cite{hong2016convergence}. The ADMM update steps for solving problem (P2.3) can be formulated based on the following augmented Lagrangian:
\begin{align}
  \mathcal{L}(\boldsymbol{\theta}, \boldsymbol{\lambda}) = &-f(\mathbf{z}_b, \mathbf{z}_c) + g(\boldsymbol{\theta}) + \Pi(\boldsymbol{\theta})
\end{align}
with 
\begin{align}
  &\Pi(\boldsymbol{\theta}, \boldsymbol{\lambda}) \nonumber \\
  = &\frac{1}{2 \rho_1}\|\mathbf{z}_b - \mathbf{M} \mathbf{r}_b + \rho_1 \boldsymbol{\lambda}_1\|_2^2 + \frac{1}{2 \rho_2}\| \mathbf{z}_c - (\mathbf{I} - \mathbf{M})\mathbf{r}_c + \rho_2 \boldsymbol{\lambda}_2 \|_2^2 \nonumber \\
  &+ \frac{1}{2 \rho_3}\|\mathbf{m} - \tilde{\mathbf{m}} + \rho_3 \boldsymbol{\lambda}_3 \|_2^2 + \frac{1}{2 \rho_4}\|\mathbf{M}(\mathbf{1} - \tilde{\mathbf{m}}) + \rho_4 \boldsymbol{\lambda}_4\|_2^2,
\end{align}
where $\boldsymbol{\lambda} = \{\boldsymbol{\lambda}_1,\boldsymbol{\lambda}_2,\boldsymbol{\lambda}_3,\boldsymbol{\lambda}_4\}$ denote the Lagrangian dual variables associated with the constraints \eqref{constraint:2.3.1} and \eqref{constraint:2.3.2}, and $\{\rho_1, \rho_2, \rho_3, \rho_4\}$ denote the penalty parameters. Following the ADMM process, the primal variables $\boldsymbol{\theta}$ can be updated to minimize $\mathcal{L}(\boldsymbol{\theta}, \boldsymbol{\lambda})$ by invoking the block coordinate descent (BCD), while the dual variables $\boldsymbol{\lambda}$ can be updated to maximize $\mathcal{L}(\boldsymbol{\theta}, \boldsymbol{\lambda})$ based on the gradient descent (GD). More specifically, the primal variables $\boldsymbol{\theta}$ are divided into three blocks, namely $\{\boldsymbol{\varpi}, \tilde{\mathbf{w}}\}$, $\{\mathbf{r}_b, \mathbf{r}_c, \tilde{\mathbf{m}}\}$, and $\{ \mathbf{z}_b, \mathbf{z}_c, \mathbf{p}, \mathbf{m}\}$, for BCD. Therefore, at the $n$-th iteration, the following ADMM update steps are formulated.

\textbf{Update $\{\boldsymbol{\varpi}^{[n+1]},\tilde{\mathbf{w}}^{[n+1]}\}$}:
Following the same path in Section \ref{sec:partial_wmmse}, the block $\{\boldsymbol{\varpi}, \tilde{\mathbf{w}}\}$ can be updated by
\begin{subequations} \label{admm_step_1}
  \begin{align}
    &\boldsymbol{\varpi}^{[n+1]} = \boldsymbol{\varpi}^{\text{MMSE}}(\mathbf{p}^{[n]}),\\
    &\tilde{\mathbf{w}}^{[n+1]} = \tilde{\mathbf{w}}^{\text{MMSE}}(\mathbf{p}^{[n]}).
  \end{align}
\end{subequations}

\textbf{Update $\{ \mathbf{r}_b^{[n+1]}, \mathbf{r}_c^{[n+1]}, \tilde{\mathbf{m}}^{[n+1]}\}$}: The subproblem with respect to $\{\mathbf{r}_b, \mathbf{r}_c, \tilde{\mathbf{m}}\}$ is given by
\begin{equation}
  \{\mathbf{r}_b^\star, \mathbf{r}_c^\star, \tilde{\mathbf{m}}^\star\} = \operatorname*{argmin}_{\mathbf{r}_b, \mathbf{r}_c, \tilde{\mathbf{m}}} \Pi(\boldsymbol{\theta}, \boldsymbol{\lambda})
\end{equation}
which is an unconstrained optimization problem. By checking the first-order optimality conditions, the update of $\{ \mathbf{r}_b^{[n+1]}, \mathbf{r}_c^{[n+1]}, \tilde{\mathbf{m}}^{[n+1]}\}$ can be obtained, which is given by
\begin{subequations}  \label{admm_step_2}
  \begin{align}
    \label{eqn:optimal_r}
    r_{b,k}^{[n+1]} = &\frac{z_{b,k}^{[n]} + \rho_1 \lambda_{1,k}^{[n]}}{m_k^{[n]}}, r_{c,k}^{[n+1]} = \frac{z_{c,k}^{[n]} + \rho_2 \lambda_{2,k}^{[n]}}{1 - m_k^{[n]}}, \forall k \in \mathcal{K}, \\
    \tilde{\mathbf{m}}^{[n+1]} = &\big( (\mathbf{M}^{[n]})^T \mathbf{M}^{[n]} + \mathbf{I} \big)^{-1} \big( (\mathbf{M}^{[n]})^T \mathbf{m}^{[n]} \nonumber \\
    &+ \rho_4 (\mathbf{M}^{[n]})^T \boldsymbol{\lambda}_4^{[n]} +  \mathbf{m}^{[n]} + \rho_3 \boldsymbol{\lambda}_3^{[n]} \big),
  \end{align}
\end{subequations}
where $\lambda_{i,k}^{[n]}$ denotes the $k$-th entry of $\boldsymbol{\lambda}_i^{[n]}$. Note that the update of $r_{b,k}^{[n+1]}$ and $r_{c,k}^{[n+1]}$ in \eqref{eqn:optimal_r} only hold when $m_k^{[n+1]} \neq 0$ and $m_k^{[n+1]} \neq 1$, respectively. However, when $m_k^{[n+1]}=0$, $r_{b,k}^{[n+1]}$ can be set to an arbitrary value since $r_{b,k}^{[n+1]}$ becomes a dummy variable and has no influence on the objective value. Similarly, when $m_k^{[n+1]} = 1$, $r_{c,k}^{[n+1]}$ can also be set to an arbitrary value.    

\textbf{Update $\{ \mathbf{z}_b^{[n+1]}, \mathbf{z}_c^{[n+1]}, \mathbf{p}^{n+1}, \mathbf{m}^{[n+1]}\}$}:
The optimization problem with respect to $\{ \mathbf{z}_b, \mathbf{z}_c, \mathbf{p}, \mathbf{m}\}$ with the fixed remaining blocks is given by
\begin{equation} \label{admm_step_3}
  \{ \mathbf{z}_b^\star, \mathbf{z}_c^\star, \mathbf{p}^\star, \mathbf{m}^\star\} = \operatorname*{argmin}_{\{ \mathbf{z}_b, \mathbf{z}_c, \mathbf{p}, \mathbf{m}\} \in \mathcal{S}} -f(\mathbf{z}_b, \mathbf{z}_c) + \Pi(\boldsymbol{\theta}, \boldsymbol{\lambda}),
\end{equation}
where $\mathcal{S}$ denotes the feasible region defined by \eqref{constraint:2.2_start}-\eqref{constraint:2.2_end}. The above problem is a convex QCQP and can be optimally solved by the existing toolbox, such as CVX \cite{cvx}.

\textbf{Update  $\boldsymbol{\lambda}^{[n+1]}$}: After updating all the blocks in $\boldsymbol{\theta}$, the GD update of the dual variables is given by 
\begin{subequations} \label{update_dual}
  \begin{align}
    &\boldsymbol{\lambda}_1^{[n+1]} = \boldsymbol{\lambda}_1^{[n]} + \frac{1}{\rho_1} \left(\mathbf{z}_b^{[n+1]} - \mathbf{M}^{[n+1]} \mathbf{r}_b^{[n+1]}\right),\\
    &\boldsymbol{\lambda}_2^{[n+1]} = \boldsymbol{\lambda}_2^{[n]} + \frac{1}{\rho_2} \left( \mathbf{z}_c^{[n+1]} - (\mathbf{I} - \mathbf{M}^{[n+1]})\mathbf{r}_c^{[n+1]}\right),\\
    &\boldsymbol{\lambda}_3^{[n+1]} = \boldsymbol{\lambda}_3^{[n]} + \frac{1}{\rho_3} \left(\mathbf{m}^{[n+1]} - \tilde{\mathbf{m}}^{[n+1]}\right),\\
    &\boldsymbol{\lambda}_4^{[n+1]} = \boldsymbol{\lambda}_4^{[n]} + \frac{1}{\rho_4} \mathbf{M}^{[n+1]} (\mathbf{1} - \tilde{\mathbf{m}}^{[n+1]}).
  \end{align}
\end{subequations}
The details of the proposed ADMM-based AO algorithm for solving (P2.3) are summarized in \textbf{Algorithm \ref{alg:B}}, where $\delta$ is the constraint violation function defined as follows:
\begin{equation}
  \delta = \max \begin{Bmatrix}
    \|\mathbf{z}_b - \mathbf{M} \mathbf{r}_b\|_\infty, \| \mathbf{z}_c - (\mathbf{I} - \mathbf{M})\mathbf{r}_c\|_\infty, \\ 
    \|\mathbf{m} - \tilde{\mathbf{m}}\|_\infty, \|\mathbf{M} (\mathbf{1} - \tilde{\mathbf{m}})\|_\infty
  \end{Bmatrix}
\end{equation}

\begin{algorithm}[htb]
  \caption{ADMM-based AO algorithm.}
  \label{alg:B}
  \begin{algorithmic}[1]
    \REQUIRE ADMM parameters and $\{ \mathbf{z}_b^{[0]}, \mathbf{z}_c^{[0]}, \mathbf{p}^{[0]}, \mathbf{m}^{[0]} \}$.
    \ENSURE $\mathbf{z}_b^\star$, $\mathbf{z}_c^\star$, $\mathbf{p}^\star$, $\mathbf{m}^{\star}$.
        \STATE{$n \coloneqq 0$.}
        \WHILE{\emph{$\tilde{R}$ not converged} or $\delta > \tau$}
          \STATE{update $\{\boldsymbol{\varpi}^{[n+1]}, \tilde{\mathbf{w}}^{[n+1]}\}$ as shown in \eqref{admm_step_1}}
          \STATE{update $\{\mathbf{r}_b^{[n+1]}, \mathbf{r}_c^{[n+1]}, \tilde{\mathbf{m}}^{[n+1]}\}$ as shown in \eqref{admm_step_2}.}
          \STATE{update $\{ \mathbf{z}_b^{[n+1]}, \mathbf{z}_c^{[n+1]}, \mathbf{p}^{[n+1]}, \mathbf{m}^{[n+1]}\}$ by solving \eqref{admm_step_3} with the fixed remaining blocks obtained in step 3 and step 4.}
          \STATE{update dual variables $\boldsymbol{\lambda}^{[n+1]}$ as shown in \eqref{update_dual}.}
          \STATE{$n \coloneqq n+1$.}
        \ENDWHILE
        \STATE{\textbf{return} $\mathbf{z}_b^\star = \mathbf{z}_b^{[n]}$, $\mathbf{z}_c^\star = \mathbf{z}_c^{[n]}$, and $\mathbf{p}^\star = \mathbf{p}^{[n]}$, $\mathbf{m}^{\star} = \mathbf{m}^{[n]}$.}
    \end{algorithmic}
\end{algorithm}

To accelerate the convergence and enhance the performance of \textbf{Algorithm \ref{alg:B}}, the initialization point needs to be appropriately selected. By observing that problem (P2.2) equals to problem (P1.2) with the additional equality constraints, we initialize the optimization variables of \textbf{Algorithm \ref{alg:B}} according to the output of \textbf{Algorithm \ref{alg:A}}. Let $\breve{\mathbf{r}}_b$, $\breve{\mathbf{r}}_c$, and $\breve{\mathbf{p}}$ denote the output of \textbf{Algorithm \ref{alg:A}}. Then, the $\mathbf{z}_b^{[0]}$, $\mathbf{z}_c^{[0]}$, $\mathbf{p}^{[0]}$, and $\mathbf{m}^{[0]}$ of \textbf{Algorithm \ref{alg:B}} can be initialized as follows:
\begin{equation} \label{initial_z_p_m}
  \mathbf{z}_b^{[0]} = \breve{\mathbf{r}}_b, \mathbf{z}_c^{[0]} = \breve{\mathbf{r}}_c, \mathbf{p}^{[0]} = \breve{\mathbf{p}}, \mathbf{m}^{[0]} = \breve{\mathbf{r}}_b \oslash (\breve{\mathbf{r}}_b + \breve{\mathbf{r}}_c),
\end{equation}
where the operation $\oslash$ denotes the element-wise division. Moreover, the decoding order proposed in Section \ref{sec:order} can also be applied for binary offloading.
Note that in \textbf{Algorithm \ref{alg:B}}, each block of $\{\boldsymbol{\varpi}, \tilde{\mathbf{w}}\}$, $\{\mathbf{r}_b, \mathbf{r}_c, \tilde{\mathbf{m}}\}$, and $\{ \mathbf{z}_b, \mathbf{z}_c, \mathbf{p}, \mathbf{m}\}$ is updated as the optimal solution with the fixed other blocks, indicating that the objective value of the augmented Lagrangian problem is non-increasing over the iterations. Therefore, the convergence of \textbf{Algorithm \ref{alg:B}} is ensured. Although the optimality of the obtained solution via \textbf{Algorithm \ref{alg:B}} is difficult to mathematically prove, our numerical results in the following section confirm that the obtained binary-offloading solution is close to the optimal one obtained by exhaustive search. The complexity analysis of \textbf{Algorithm \ref{alg:B}} is given as follows. The computational complexities for updating $\{\boldsymbol{\varpi}, \tilde{\mathbf{w}}\}$ and $\{\mathbf{r}_b, \mathbf{r}_c, \tilde{\mathbf{m}}\}$ are $\mathcal{O}(2K^2 N^3)$ and $\mathcal{O}(2K^3)$, respectively. By applying the interior-point method \cite{nesterov1994interior}, the worst case computational complexities for updating $\{\mathbf{z}_b, \mathbf{z}_c, \mathbf{p}, \mathbf{m}\}$ is $\mathcal{O}( K^{1/2} (4K+N)(3K+N)^2 )$. Finally, the complexity for updating the dual variables $\boldsymbol{\lambda}$ and $\boldsymbol{\mu}$ is $\mathcal{O}(3K^2)$. Therefore, the overall complexity of \textbf{Algorithm \ref{alg:B}} is $\mathcal{O}(I_{\mathrm{ite}}(2K^2 N^3 + 2K^3 + K^{1/2} (4K+N)(3K+N)^2 + 6K^2))$.
\section{Numerical Results} \label{sec:results}

In this section, we provide the numerical results obtained by Monte Carlo simulations for characterizing the proposed NOMA-aided JCSMC framework. In particular, the BS is assumed to be equipped with a ULA with $N=8$ antennas and half wavelength spacing. We assume that there are $K=3$ computation users, one desired target under the angle of $0^\circ$, and $L=4$ clutter sources under the angles of $\{-60^{\circ}, -30^{\circ}, 30^{\circ}, 60^{\circ}\}$. The channels between the BS and computation users are assumed to obey Rayleigh fading with the path loss of $L_{u,k}(\text{dB}) = 30 + 30 \log_{10} (d_{u,k})$, where $d_{C,k}$ denotes the distance between the BS and computation user $k$. The channels between the BS and target/clutters are assumed to have pure LoS components with the similarly path loss of $L_{s,m}(\text{dB}) = 30 + 30 \log_{10} (2d_{s,m})$ and the same complex reflection factor of $|\tilde{\beta}_m|=1$, where $d_{s,m}$ denotes the distance between the BS and $m$-th target/clutters. The parameters in simulations are set as follows: $d_{u,k} = 60 \text{ m}, \forall k \in \mathcal{K}$, and $d_{s,m} = 50 \text{ m}, \forall m \in \mathcal{M}$. The noise power at the receivers of the BS and the CS is assumed to be $\sigma_u^2=\sigma_d^2 = -80$ dBm. The communication bandwidth is assumed to be $B=30$ MHz. For the computation task, we set $\phi = 3 \times 10^3$ cycle/bit and $\kappa = 10^{-26}$ \cite{wang2018multi}. The penalty factors of \textbf{Algorithm \ref{alg:B}} are set as $\rho_1 = \rho_2 = 1$ and $\rho_3 = \rho_4 = 0.1$. The above parameter setup is assumed throughout our simulations unless otherwise specified. 

To prove the effectiveness of the proposed NOMA-aided JCSMC framework, the following benchmark schemes are considered for comparison.
\begin{itemize}
    \item \textbf{BS-only/CS-only computing}:
    This scheme is based on the single-tier computing structure, i.e., all the computation tasks from users are only executed at BS or CS. The corresponding computation rate can be obtained by solving problem (P2.1) with $m_k=1, \forall k \in \mathcal{K}$ (at BS) or $m_k=0, \forall k \in \mathcal{K}$ (at CS) through \textbf{Algorithm \ref{alg:A}}.

    \item \textbf{Space-division multiple access (SDMA)-aided framework}: 
    In this scheme, the BS decodes the offloading signals from each user subject to interference from all the remaining users with MMSE receivers. Thus, the covariance matrix of the effective noise is given by
    \begin{align} \label{sdma_1}
        \mathbf{R}_{\tilde{\mathbf{n}}_k}^{\text{SDMA}} &= \sum_{i \in \mathcal{K}, i \neq k} P_u \mathbf{h}_{u,i} \mathbf{h}_{u,i}^H + (\mathbf{H}_s+\mathbf{H}_c) \mathbf{p} \mathbf{p}^H  \nonumber \\
        &\times (\mathbf{H}_s + \mathbf{H}_c)^H + \sigma_{u}^2 \mathbf{I}_N.
    \end{align}
    Accordingly, the achievable computation offloading rate from user $k$ to the BS with the MMSE receiver is $R_{u,k}^{\text{SDMA}} = \log_2(1 + P_u \mathbf{h}_{u,k}^H (\mathbf{R}_{\tilde{\mathbf{n}}_k}^{\text{SDMA}})^{-1} \mathbf{h}_{u,k} )$. Moreover, as the SIC is not exploited in SDMA, the interference from the computation offloading signal to the echo sensing signal cannot be removed. Thus, the covariance matrix of the effective noise for the echo sensing signal is given by
    \begin{equation} \label{sdma_2}
        \mathbf{R}_{\mathbf{c}}^{\text{SDMA}} = \sum_{k \in \mathcal{K}} P_u \mathbf{h}_{u,k} \mathbf{h}_{u,k}^H + \mathbf{H}_c \mathbf{p} \mathbf{p}^H \mathbf{H}_c^H + \sigma_{u}^2 \mathbf{I}_N.
    \end{equation}
    Then, the corresponding sensing SINR is $\gamma_s^{\text{SDMA}} = \mathbf{p}^H \mathbf{H}_s^H (\mathbf{R}_{\mathbf{c}}^{\text{SDMA}})^{-1} \mathbf{H}_s \mathbf{p}$.
    The related optimization problems for partial and binary computation offloading modes can be solved by using \textbf{Algorithm \ref{alg:A}} and \textbf{Algorithm \ref{alg:B}}, respectively.
\end{itemize}

\begin{figure}[t!]
    \centering
    \includegraphics[width=0.4\textwidth]{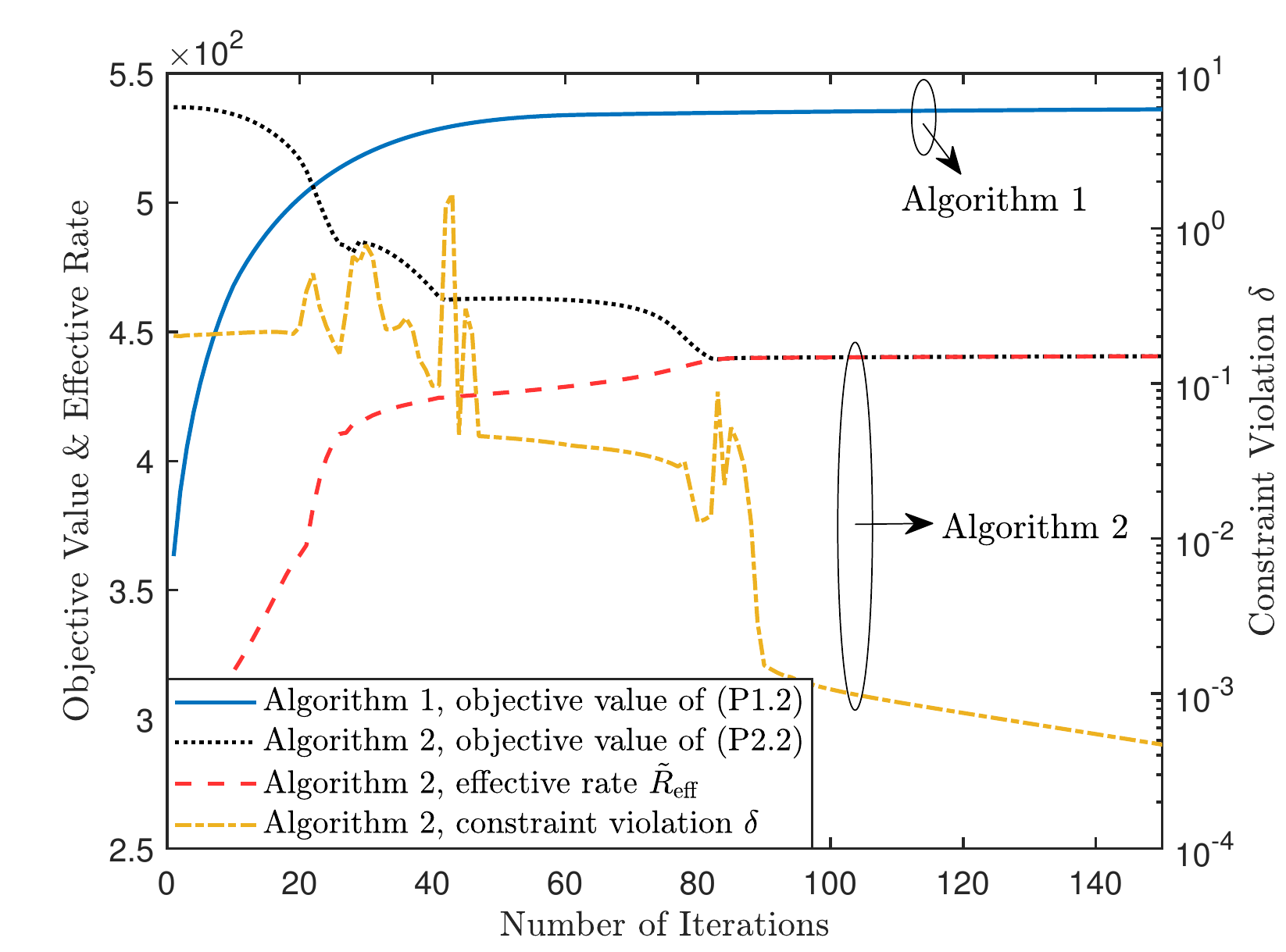}
    \caption{Convergence behavior of \textbf{Algorithms \ref{alg:A}} and \textbf{\ref{alg:B}}.}
    \label{fig:convergence_WMMSE}
    \vspace{-0.4cm}
\end{figure}

In Fig. \ref{fig:convergence_WMMSE}, we first demonstrate the convergence behavior of the proposed \textbf{Algorithms \ref{alg:A}} and \textbf{\ref{alg:B}}. It can be observed that the computation rate obtained by \textbf{Algorithm \ref{alg:A}} monotonically increases in each iteration and finally converges to a stationary value. Next, at the $n$-th iteration of \textbf{Algorithm \ref{alg:B}}, we define the effective computation rate with the binary offloading mode as follows:
\begin{equation}
    \tilde{R}_{\mathrm{eff}}^{[n]} = \sum_{k \in \mathcal{K}} \omega_k \left( \breve{m}_k^{[n]} z_{b,k}^{[n]} + (1-\breve{m}_k^{[n]}) z_{c,k}^{[n]}  \right),
\end{equation}
where $\breve{m}_k^{[n]} = 0$ if $m_k^{[n]} \le 0.5$, otherwise $\breve{m}_k^{[n]} = 1$. It can be seen that for the \textbf{Algorithm \ref{alg:B}} both the objective value of (P2.2), i.e., $\sum_{k \in \mathcal{K}} \omega_k \big( z_{b,k}^{[n]} + z_{c,k}^{[n]} \big)$, and the effective rate converge to a stationary value. Meanwhile, the gap between them eventually becomes negligible since the constraint violation $\delta$ approaches zero as the iterations progress. 

In Fig. \ref{fig:effectiveness_admm}, we examine the quality of the obtained binary offloading scheme via \textbf{Algorithm \ref{alg:B}} and the proposed NOMA user ordering scheme, by comparing them with the optimal exhaustive search scheme and the random selection scheme. As can be observed, the obtained binary offloading scheme achieves a very close performance to the optimal exhaustive search scheme and significantly outperforms the random selection scheme. Then, to evaluate the effectiveness of the proposed NOMA user ordering scheme, we consider two scenarios. One is with different weights of the computation rate and equal path loss of the computation users, while the other is the opposite. It is illustrated that the proposed NOMA user ordering scheme has the same performance as the optimal exhaustive search scheme in both scenarios. Furthermore, it is interesting to observe that the random selection scheme suffers performance loss if only the weights are different, but it does not if only the path loss is different, which implies that the path loss makes no difference in the decoding order.

\begin{figure}[t!]
    \centering
    \includegraphics[width=0.4\textwidth]{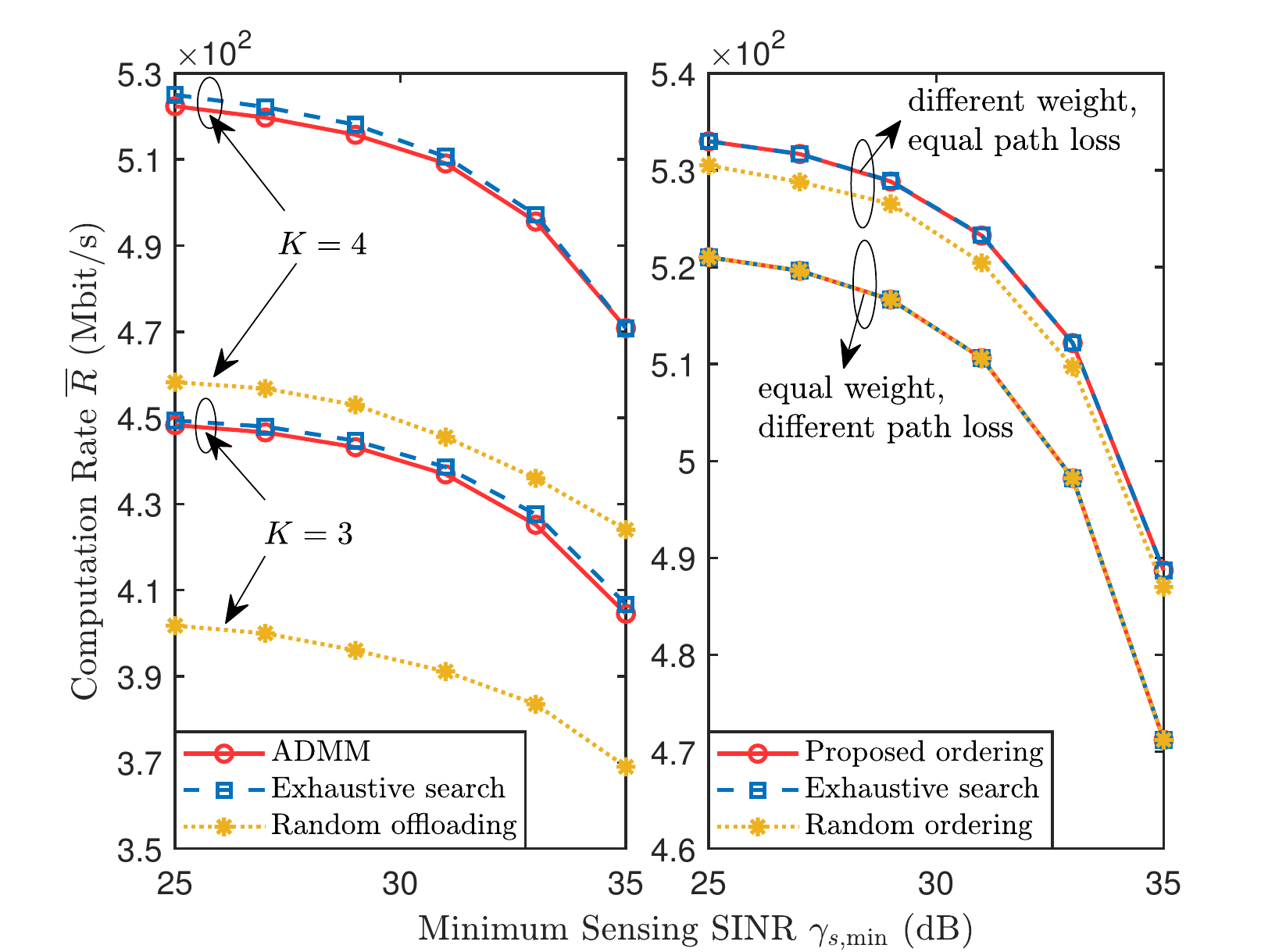}
    \caption{Efficiency of the proposed binary offloading scheme (left) and NOMA user ordering scheme (right).}
    \label{fig:effectiveness_admm}
    \vspace{-0.4cm}
\end{figure}
\begin{figure}[t!]
    \centering
    \includegraphics[width=0.4\textwidth]{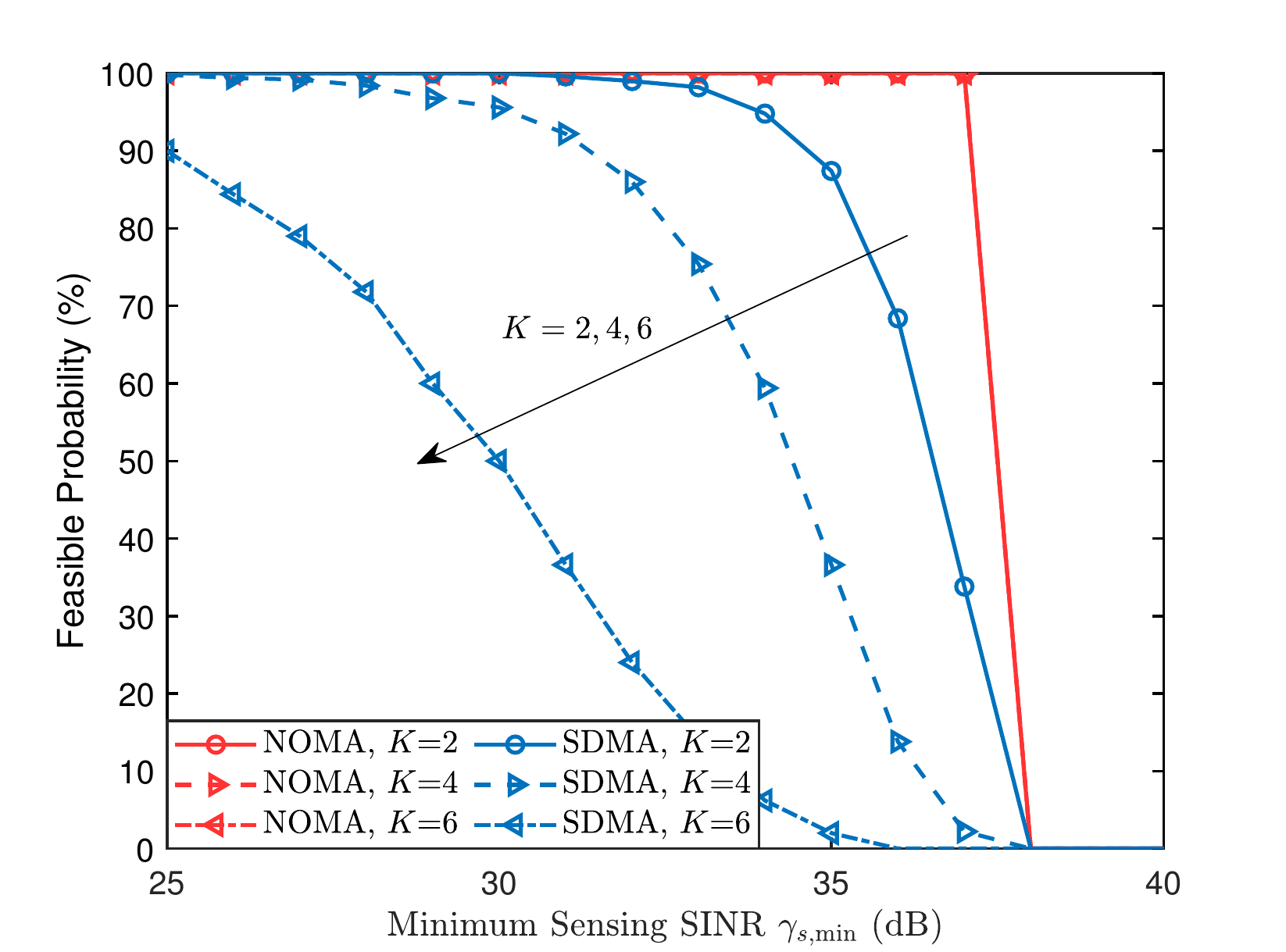}
    \caption{Feasible probability of (P1.1) and (P2.1) versus sensing SINR $\gamma_{s,\min}$.}
    \label{fig:feasible}
    \vspace{-0.4cm}
\end{figure}

It is worth noting that the problems (P1.1) and (P2.1) are not always feasible for both NOMA and SDMA because of the minimum sensing SINR constraint. 
Consequently, in Fig. \ref{fig:feasible}, we illustrate the successful probability of obtaining a feasible solution to problems (P1.1) and (P2.1) via the proposed algorithms versus the sensing SINR. The results are obtained over $1000$ random channel realizations. We can observe that when NOMA is employed, a feasible solution to problems (P1.1) and (P2.1) can always be obtained if $\gamma_{s,\min} \le 37$ dB. However, when SDMA is employed, the successful probability decreases as $K$ and $\gamma_{s,\min}$ increase. This is because the SDMA-assisted framework does not eliminate the interference of the offloading signals to the sensing signals. These results highlight the necessity of NOMA in terms of guaranteeing the feasibility of the JCSMC framework in practice. In the following simulations, to make a fair comparison between NOMA and SDMA, we only keep the feasible results.

\begin{figure}[t!]
    \centering
    \includegraphics[width=0.4\textwidth]{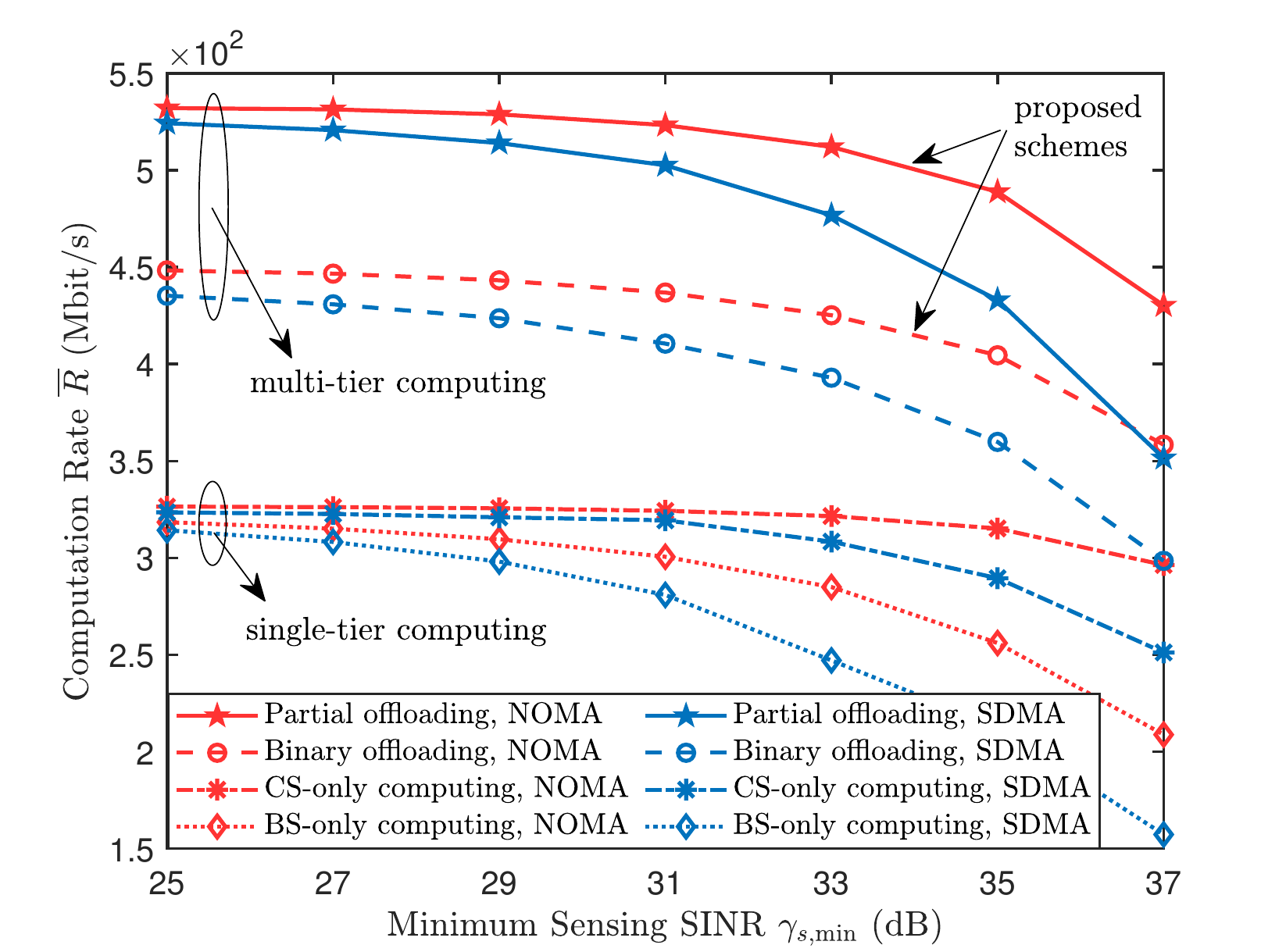}
    \caption{Computation rate $\tilde{R}$ versus sensing SINR $\gamma_{s, \min}$.}
    \label{fig:rate_vs_SINR}
    \vspace{-0.4cm}
\end{figure}
\begin{figure}[t!]
    \centering
    \includegraphics[width=0.4\textwidth]{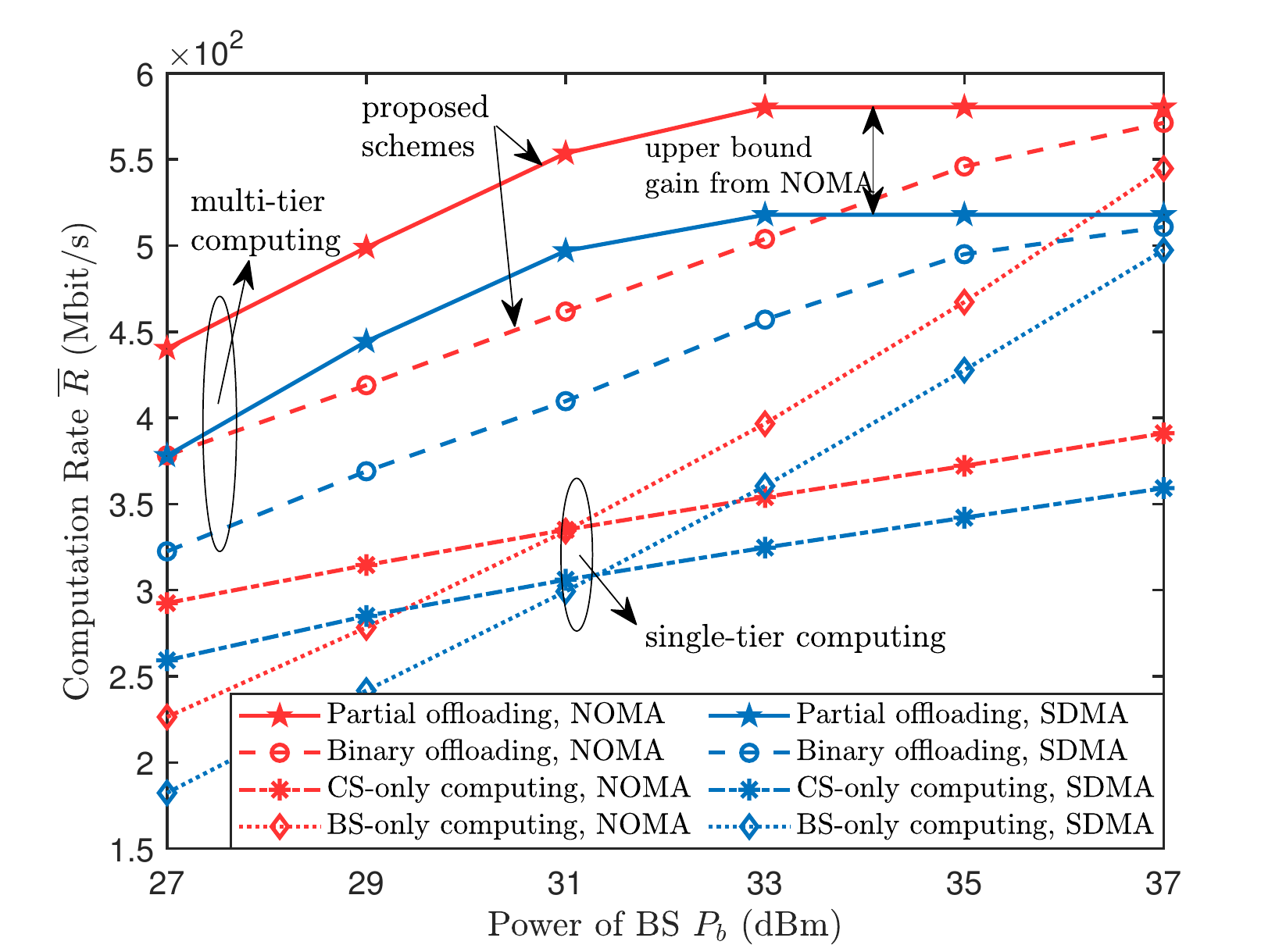}
    \caption{Computation rate $\tilde{R}$ versus power budget $P_b$, where $\gamma_{s,\min}=30$ dB.}
    \label{fig:rate_vs_power}
    \vspace{-0.4cm}
\end{figure}

In Fig. \ref{fig:rate_vs_SINR}, we present the computation rate achieved by different schemes versus the sensing SINR. As can be seen, the partial offloading mode achieves the best performance since it fully exploits the computation resources at both BS and CS. Furthermore, both partial offloading and binary offloading modes significantly outperform the CS-only and BS-only computing schemes, revealing the advantages of multi-tier computing when the capability of the BS is limited. It can also be observed that the computation rate obtained by all schemes decreases as $\gamma_{s}$ increases, which is consistent with the computing-sensing trade-off analyzed in \textbf{Remark \ref{remark_1}}. Finally, the employment of SIC allows NOMA-aided schemes to achieve a much higher computation rate than SDMA-aided schemes, especially when the sensing SINR is high. The above results verify the effectiveness of NOMA in the JCSMC framework, which is consistent with \textbf{Remark \ref{remark_2}}.

In Fig. \ref{fig:rate_vs_power}, we investigate the computation rate achieved by different schemes versus the total power available at the BS, where we set $\gamma_{s,\min} = 30$ dB. It is shown that, with the increase of $P_b$, the higher computation rate can be achieved by all the schemes because there are more resources available for both computing and offloading at the BS. However, the computation rate attained upon increasing $P_b$ is bounded by a certain value since it is limited by the communication rate $R_{u,k}, \forall k \in \mathcal{K}$ from the users to the BS. Furthermore, since NOMA is the capacity-achieving method for uplink communication, the NOMA-aided schemes attain a higher upper bound of the computation rate than the SDMA-aided schemes, which matches the analysis in \textbf{Remark \ref{remark_2}}. It is also interesting to observe that multi-tier computing leverages less power to reach the associated upper bounds, while the computation rate gain from multi-tier computing over single-tier computing, especially the BS-only computing, gradually becomes negligible when $P_b$ is sufficiently large. 

\begin{figure}[t!]
    \centering
    \includegraphics[width=0.4\textwidth]{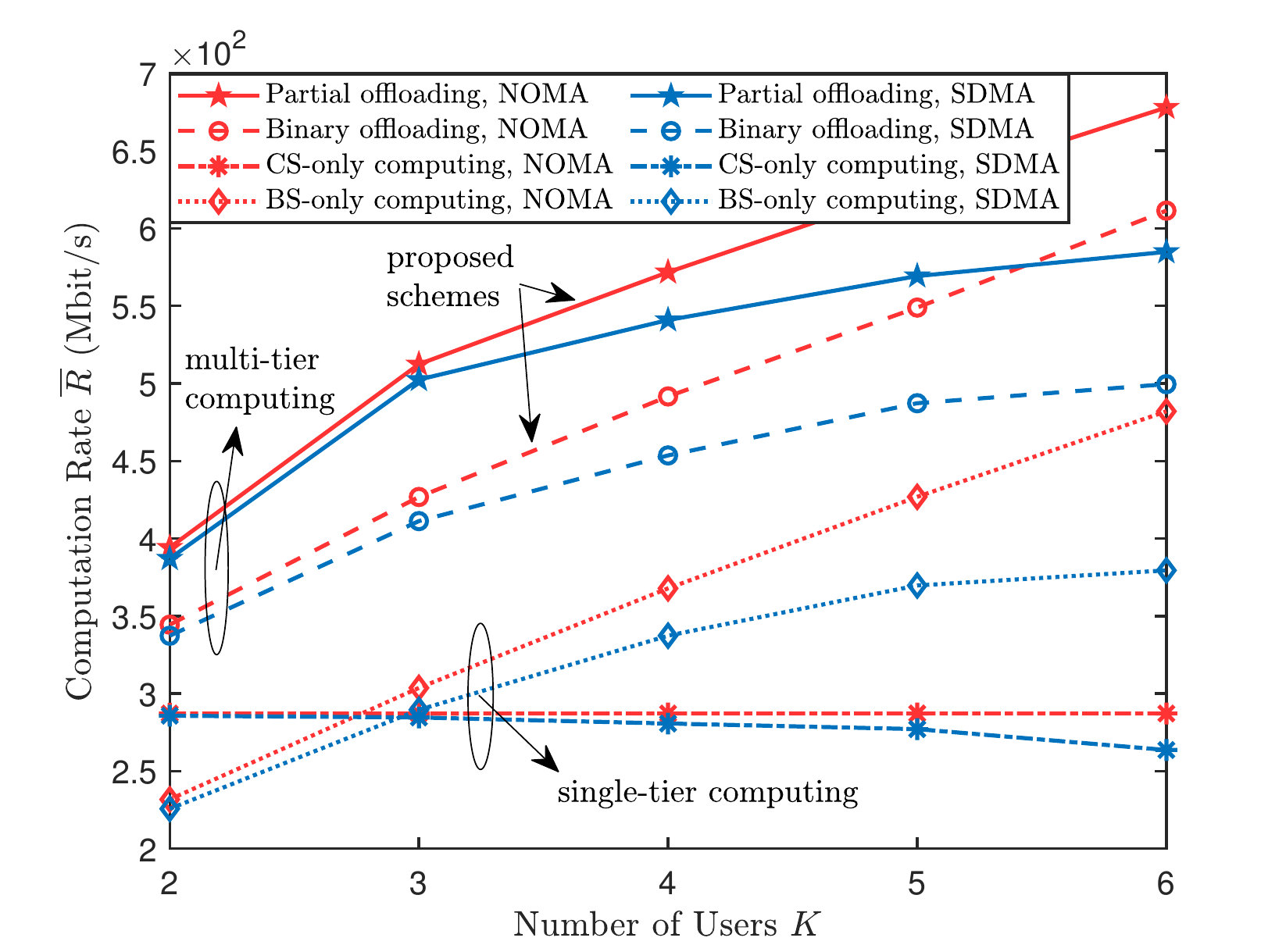}
    \caption{Computation rate $\tilde{R}$ versus the number of users $K$, where $\gamma_{s,\min}=30$ dB.}
    \label{fig:rate_vs_user_number}
    \vspace{-0.4cm}
\end{figure}
\begin{figure}[t!]
    \centering
    \includegraphics[width=0.4\textwidth]{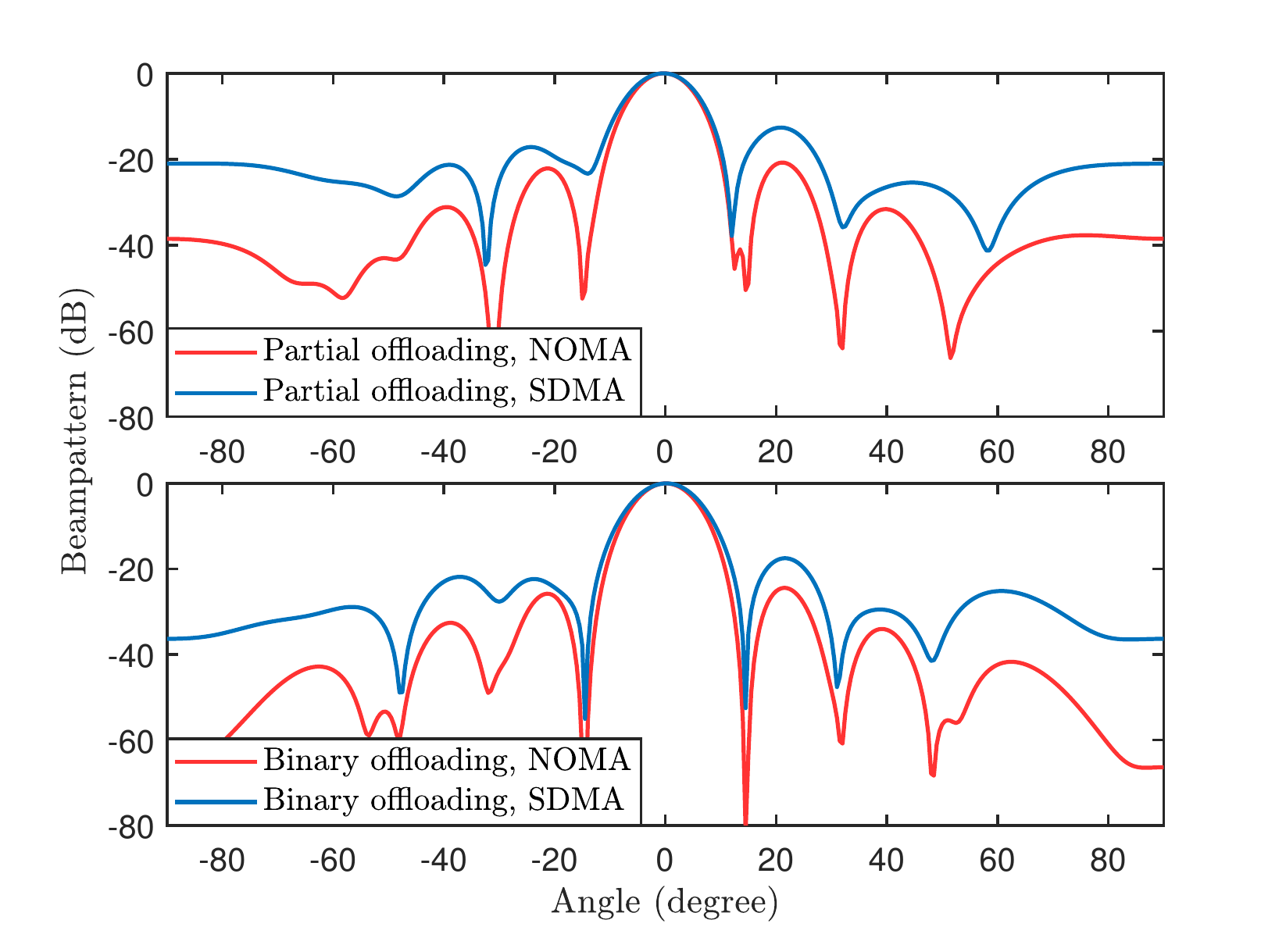}
    \caption{Normalized transmit-receive beampattern obtained by different schemes, where $\tilde{R} \approx 520$ Mbit/s for the partial offloading and $\tilde{R} \approx 420$ Mbit/s for the binary offloading.}
    \label{fig:beampattern}
    \vspace{-0.4cm}
\end{figure}

In Fig. \ref{fig:rate_vs_user_number}, we further demonstrate the benefits of the proposed framework compared with the benchmarks considering the increase of the number of users $K$, where $\gamma_{s,\min}$ is set as $30$ dB and the weights for all the users to the same value, i.e., $\omega_k = 1.0, \forall k \in \mathcal{K}$. We can observe that for the NOMA-aided schemes, the computation rate of partial offloading, binary offloading, and BS-only computing becomes higher as $K$ increases, while that of CS-only computing remains unchanged. The reason is that the performance of the CS-only computing is only determined by the BS$\rightarrow$CS communication rate when the user$\rightarrow$BS communication rate is sufficiently large. 
Conversely, the SDMA-aided schemes experience a computation rate degradation especially when $K$ is high. This is indeed expected because the sensing functionality required more power subject to stronger interference from more computation users. The above results also confirm the necessity of NOMA in scenarios with massive connectivity.

In Fig. \ref{fig:beampattern}, we plot the normalized transmit-receive beampattern of sensing over an angle grid $\mathbf{\Phi} = [-\frac{\pi}{2}:\frac{\pi}{100}:\frac{\pi}{2}]$ via different schemes. The results are obtained over one random channel realization. The normalized transmit-receive beampattern is defined as
\begin{equation}
    P(\theta_i) = \frac{|\mathbf{w}_s^\star \mathbf{a}(\theta_i) \mathbf{a}^T(\theta_i) \mathbf{p} |^2}{ \max_{\theta_j \in \mathbf{\Phi}} |\mathbf{w}_s^\star \mathbf{a}(\theta_i) \mathbf{a}^T(\theta_i) \mathbf{p}|^2 }, \forall \theta_i \in \mathbf{\Phi},
\end{equation}
which indicates the array gain at the receiver with the transmit beamformer $\mathbf{p}$ in the direction of $\theta_i$. Fig. \ref{fig:beampattern} shows that when the computation rate is approximately the same, both NOMA-aided and SDMA-aided schemes can achieve the mainlobe in the target direction. It can also be seen that the interference from the clutter directions can be well suppressed in the NOMA-aided frameworks, leading to robust sensing performance, while the SDMA-aided framework has a lower peak-to-sidelobe ratio, thus suffering stronger interference and resulting in lower sensing performance. 

\section{Conclusions} \label{sec:conclusion}

In this paper, a NOMA-aided joint communication, sensing, and multi-tier computing framework was proposed. Two computation rate maximization problems subject the sensing SINR constraints were formulated considering both partial offloading and binary offloading modes. A WMMSE-based AO algorithm and an ADMM-based AO algorithm were proposed to address the two optimization problems, respectively. Numerical results revealed that multi-tier computing significantly outperforms single-tier computing especially when the resources at the BS are limited. It was also shown that NOMA plays an important role in guaranteeing high-quality computing and sensing performance simultaneously compared with the conventional transmission scheme.     

\section*{Appendix~A: Proof of Proposition \ref{proposition_1}} \label{appendix_2}

The Lagrangian of the problem (P1.2) can be expressed as 
\begin{align}
    &\mathcal{L} = \nonumber \\
    &-\sum_{k \in \mathcal{K}} \omega_k \left( r_{b,k} + r_{c,k} \right)  + \sum_{k \in \mathcal{K}} \mu_k \underbrace{(r_{b,k}+r_{c,k}-B c + B \varepsilon_{u,k})}_{\mathcal{G}_{u,k}} \nonumber\\
    &+ \mu_d \underbrace{\left(\sum_{k \in \mathcal{K}} r_{c,k} - B c + B \varepsilon_d \right)}_{\mathcal{G}_d}
    + \mu_s \underbrace{(\Gamma_{s,\min} - c + \varepsilon_s)}_{\mathcal{G}_s} + \mathcal{G}_0,
\end{align}
where $\boldsymbol{\mu} = [\mu_1,\dots,\mu_K, \mu_d, \mu_s]^T$ denote the Lagrangian multipliers associated with the constraints \eqref{constraint:up_rate_mmse}-\eqref{constraint:sensing_mmse} and $\mathcal{G}_0$ denotes all the terms corresponding to the rest of the constraints. Similarly, the Lagrangian of the problem (P1.1) is given by 
\begin{align}
    \tilde{\mathcal{L}}
    & = -\sum_{k \in \mathcal{K}} \mu_k \left( r_{b,k} + r_{c,k} \right)  + \sum_{k \in \mathcal{K}} \mu_k \underbrace{(r_{b,k}+r_{c,k}-B R_{u,k})}_{\tilde{\mathcal{G}}_{u,k}} \nonumber\\
    &+ \mu_d \underbrace{\left(\sum_{k \in \mathcal{K}} r_{c,k} - B R_d \right)}_{\tilde{\mathcal{G}}_d}
    + \mu_s \underbrace{(\Gamma_{s,\min} - R_s)}_{\tilde{\mathcal{G}}_s} + \mathcal{G}_0.
\end{align}
Suppose $\{\boldsymbol{\varpi}^\star, \tilde{\mathbf{w}}^\star, \mathbf{r}_b^\star, \mathbf{r}_c^\star, \mathbf{p}^\star\}$ is a KKT optimal solution to problem (P1.2) and $\boldsymbol{\mu}^\star$ are the corresponding optimal Lagrangian multipliers. Then, \textbf{Proposition \ref{proposition_1}} claims that if $\{\boldsymbol{\varpi}^\star, \tilde{\mathbf{w}}^\star\}$ satisfies \eqref{condition:weight} and \eqref{condition:receiver}, $\{\mathbf{r}_b^\star, \mathbf{r}_c^\star, \mathbf{p}^\star\}$ is a KKT optimal solution to problem (P1.1). Firstly, it is easy to verify that with the optimal weights and receivers given in \eqref{condition:weight} and \eqref{condition:receiver}, the stationarity conditions associated with the variables $\boldsymbol{\varpi}$ and $\tilde{\mathbf{w}}$ of problem (P1.2) are satisfied. Thus, we only focus on the remaining KKT conditions in the following.

Firsty, we show that the stationarity conditions associated with the variables $\{\mathbf{r}_b, \mathbf{r}_c, \mathbf{p}\}$ of problem (P1.2) imply the corresponding stationarity conditions of problem (P1.1). The stationarity conditions of problem (P1.2) are given by 
\begin{align}
    &\frac{\partial \mathcal{L}}{\partial r_{b,k}} = -\mu_k + \mu_k + \frac{\partial \mathcal{G}_0}{\partial r_{b,k}}=0, \forall k \in \mathcal{K},\\
    &\frac{\partial \mathcal{L}}{\partial r_{c,k}} = -\mu_k + \mu_k + \mu_d + \frac{\partial \mathcal{G}_0}{\partial r_{b,k}}=0, \forall k \in \mathcal{K},\\
    \label{kkt_p}
    &\nabla_{\mathbf{p}} \mathcal{L} = \mu_k B \nabla_{\mathbf{p}} \varepsilon_{u,k} + \mu_d B \nabla_{\mathbf{p}} \varepsilon_d + \mu_s \nabla_{\mathbf{p}} \varepsilon_s + \nabla_{\mathbf{p}} \mathcal{G}_0 = \mathbf{0}.
\end{align}
It is easy to verify that $\frac{\partial \tilde{\mathcal{L}}}{\partial r_{b,k}} = \frac{\partial \mathcal{L}}{\partial r_{b,k}}=0$ and $\frac{\partial \tilde{\mathcal{L}}}{\partial r_{c,k}} = \frac{\partial \mathcal{L}}{\partial r_{c,k}}=0$. Therefore, the stationarity conditions associated with $\mathbf{r}_b$ and $\mathbf{r}_c$ of problem (P1.1) are satisfied. Furthermore, according to the rate-WMMSE relationships in \eqref{eqn:comm_rate_wmmse}, we have $R_{u,k} = \max_{\varpi_{u,k}, \tilde{\mathbf{w}}_{u,k}} \left( c - \varepsilon_{u,k} \right)$, the unique optimal solution of which satisfies the conditions in \eqref{condition:weight} and \eqref{condition:receiver}, i.e., $\varpi_{u,k}^\star = \varpi_{u,k}^{\text{MMSE}}(\mathbf{p})$ and $\tilde{\mathbf{w}}_k^\star = \tilde{\mathbf{w}}_k^{\text{MMSE}}(\mathbf{p})$. Then, by applying the Danskin's theorem \cite{Bertsekas1999nonlinear}, it holds that $\nabla_{\mathbf{p}} R_{u,k} = -\nabla_{\mathbf{p}} \varepsilon_{u,k}(\varpi_{u,k}^\star, \tilde{\mathbf{w}}_{u,k}^\star)$. 
Following the same procedure, it can also be proved that $\nabla_{\mathbf{p}} R_d = -\nabla_{\mathbf{p}} \varepsilon_d(\varpi_d^\star, \tilde{\mathbf{w}}_d^\star)$ and $\nabla_{\mathbf{p}} R_s = -\nabla_{\mathbf{p}} \varepsilon_d(\varpi_s^\star, \tilde{\mathbf{w}}_s^\star)$. Thus, given that $ \{\boldsymbol{\varpi}^\star, \tilde{\mathbf{w}}^\star, \mathbf{r}_b^\star, \mathbf{r}_c^\star, \mathbf{p}^\star\}$ satisfies \eqref{condition:weight}, \eqref{condition:receiver}, and \eqref{kkt_p}, $ \{\mathbf{r}_b^\star, \mathbf{r}_c^\star, \mathbf{p}^\star\}$ satisfies the following condition:
\begin{align}
    \nabla_{\mathbf{p}} \tilde{\mathcal{L}} = &-\mu_k B \nabla_{\mathbf{p}} R_{u,k} - \mu_d B \nabla_{\mathbf{p}} R_d - \mu_s \nabla_{\mathbf{p}} R_s + \nabla_{\mathbf{p}} \mathcal{G}_0 \nonumber \\
    = &\nabla_{\mathbf{p}} \mathcal{L} = \mathbf{0},
\end{align}
which indicates that the stationarity condition associated with $\mathbf{p}$ of problem (P1.1) is satisfied.

Next, we show that the feasibility and complementarity conditions of problem (P1.2) imply the corresponding conditions of problem (P1.1), where we only focus on the different constraints between (P1.2) and (P1.1), i.e., \eqref{constraint:up_rate_mmse}-\eqref{constraint:sensing_mmse} and \eqref{constraint:1.1.1}-\eqref{constraint:1.1.3}. Take \eqref{constraint:up_rate_mmse} and \eqref{constraint:1.1.1} as an example. The feasibility and complementarity conditions associated with \eqref{constraint:up_rate_mmse} of (P1.2) are given by 
\begin{gather}
    \label{kkt_complementarity}
    \mu_k \ge 0, \quad \mathcal{G}_{u,k} \le 0, \quad \mu_k \mathcal{G}_{u,k}  = 0.
\end{gather}  
By applying the conditions in \eqref{condition:weight} and \eqref{condition:receiver} and the rate-WMMSE relationship in \eqref{eqn:comm_rate_wmmse}, it can be verified that $\tilde{\mathcal{G}}_{u,k} = \mathcal{G}_{u,k}(\varpi_{u,k}^\star, \tilde{\mathbf{w}}_{u,k}^\star)$. Thus, given that $ \{\boldsymbol{\varpi}^\star, \tilde{\mathbf{w}}^\star, \mathbf{r}_b^\star, \mathbf{r}_c^\star, \mathbf{p}^\star\}$ satisfies \eqref{condition:weight}, \eqref{condition:receiver}, and \eqref{kkt_complementarity}, it holds for $ \{\mathbf{r}_b^\star, \mathbf{r}_c^\star, \mathbf{p}^\star\}$ that 
\begin{gather}
    \mu_k \ge 0, \quad \tilde{\mathcal{G}}_{u,k} \le 0, \quad \mu_k \tilde{\mathcal{G}}_{u,k} = 0,
\end{gather}
which indicates that the feasibility and complementarity conditions associated with \eqref{constraint:1.1.1} of (P1.1) are satisfied. Following the same procedure, it can also be proved that the conditions associated with \eqref{constraint:1.1.2} and \eqref{constraint:1.1.3} of (P1.1) are also satisfied, which completes the proof.

\section*{Appendix~B: Proof of Proposition \ref{proposition_2}} \label{appendix_2}

Firstly, we demonstrate the convergence of \textbf{Algorithm \ref{alg:A}}. It can be observed that solving problem (P1.2) via \textbf{Algorithm \ref{alg:A}} is equivalent to solving the following problem to obtain $\{\mathbf{r}_b^{[n+1]}, \mathbf{r}_c^{[n+1]},$ $\mathbf{p}^{[n+1]}\}$ at the $n$-th iteration: 
\begin{subequations}
\begin{align}
    &\text{(P1.2.n):} \quad \max_{\mathbf{r}_b, \mathbf{r}_c, \mathbf{p}} \quad  \sum_{k \in \mathcal{K}} \omega_k \left(r_{b,k} + r_{c,k} \right) \\
    \label{constraint:iteration_n}
    \mathrm{s.t.} \quad & r_{b,k} + r_{c,k} \le B c - B \varepsilon_{u,k}(\varpi_{u,k}^{[n+1]}, \tilde{\mathbf{w}}_{u,k}^{[n+1]}, \mathbf{p}), \forall k \in \mathcal{K}, \\
    & \sum_{k \in \mathcal{K}} r_{c,k}\le B c - B \varepsilon_d(\varpi_d^{[n+1]}, \tilde{\mathbf{w}}_d^{[n+1]}, \mathbf{p}) \\
    & \Gamma_{s,\min} \le c - \varepsilon_s(\varpi_s^{[n+1]}, \tilde{\mathbf{w}}_s^{[n+1]}, \mathbf{p}), \\
    & \eqref{constraint:resource_compute}-\eqref{constraint:positive},
\end{align}
\end{subequations}
Take constraint \eqref{constraint:iteration_n} as an example. According to the rate-WMMSE relationships, it holds that 
\begin{align}
&c - \varepsilon(\varpi_{u,k}^{[n+1]}, \tilde{\mathbf{w}}_{u,k}^{[n+1]}, \mathbf{p}) \le R_{u,k}(\mathbf{p}), \\
&c - \varepsilon(\varpi_{u,k}^{[n+1]}, \tilde{\mathbf{w}}_{u,k}^{[n+1]}, \mathbf{p}^{[n]}) = R_{u,k}(\mathbf{p}^{[n]}).
\end{align}
Then, it follows that 
\begin{align}
&r_{b,k}^{[n+1]} + r_{c,k}^{[n+1]} \overset{(a)}{\le} B c- B \varepsilon(\varpi_{u,k}^{[n+1]}, \tilde{\mathbf{w}}_{u,k}^{[n+1]}, \mathbf{p}^{[n+1]}) \nonumber\\
&\overset{(b)}{\le}  B R_{u,k}(\mathbf{p}^{[n+1]}) \overset{(c)}{=} B c - B \varepsilon_k (\varpi_{u,k}^{[n+2]}, \tilde{\mathbf{w}}_{u,k}^{[n+2]}, \mathbf{p}^{[n+1]}),
\end{align}
where $(a)$ is due to the feasibility of $\{\mathbf{r}_b^{[n+1]}, \mathbf{r}_c^{[n+1]}, \mathbf{p}^{[n+1]}\}$ to problem (P1.2.n), and $(b)$ and $(c)$ stem from the rate-WMMSE relationships. Thus, we have proved that the solution $\{\mathbf{r}_b^{[n+1]}, \mathbf{r}_c^{[n+1]}, \mathbf{p}^{[n+1]}\}$ at the $n$-th iteration also satisfies the constraint \eqref{constraint:iteration_n} of problem at the $(n+1)$-th iteration. Following the same path and checking all the constraints, it is easy to prove that the solution $\{\mathbf{r}_b^{[n+1]}, \mathbf{r}_c^{[n+1]}, \mathbf{p}^{[n+1]}\}$ is also a feasible solution to the problem at $(n+1)$-th iteration, which guarantees that the objective value is non-decreasing after each iteration in Algorithm \ref{alg:A}. Since the objective function is bounded due to the power constraint \eqref{constraint:resource_compute}, the convergence of \textbf{Algorithm \ref{alg:A}} is guaranteed.

Consequently, the sequence $\{\boldsymbol{\varpi}^{[n]}, \tilde{\mathbf{w}}^{[n]}, \mathbf{r}_b^{[n]}, \mathbf{r}_c^{[n]}, \mathbf{p}^{[n]}\}_{n=0}^\infty$ must converge to a cluster point $\{\boldsymbol{\varpi}^\star, \tilde{\mathbf{w}}^\star,$ $ \mathbf{r}_b^\star, \mathbf{r}_c^\star, \mathbf{p}^\star\}$. Since the map $\tilde{\mathbf{w}}_{u,k}^{\text{MMSE}}(\cdot)$ are continuous, it holds that
\begin{equation}
    \tilde{\mathbf{w}}_{u,k}^\star = \lim_{n \rightarrow \infty} \tilde{\mathbf{w}}_{u,k}^{\text{MMSE}}(\mathbf{p}^{[n]}) = \tilde{\mathbf{w}}_{u,k}^{\text{MMSE}}(\mathbf{p}^\star).
\end{equation} 
Based on the same idea, it can be proved that the cluster point satisfies \eqref{condition:weight} and \eqref{condition:receiver}. Thus, according to \textbf{Proposition \ref{proposition_1}}, $\boldsymbol{\varpi}^\star$ and $\tilde{\mathbf{w}}^\star$ satisfy the KKT conditions of (P1.2). Given the optimal $\boldsymbol{\varpi}^\star$ and $\tilde{\mathbf{w}}^\star$, the $\{\mathbf{r}_b^\star, \mathbf{r}_c^\star, \mathbf{p}^\star\}$ is a globally optimal solution to the following convex optimization problem:
\begin{subequations}
\begin{align}
    \text{(P1.3):} &\quad \max_{\mathbf{r}_b, \mathbf{r}_c, \mathbf{p}} \quad \sum_{k \in \mathcal{K}} \omega_k \left(r_{b,k} + r_{c,k} \right) \\
    \mathrm{s.t.} \quad & r_{b,k} + r_{c,k} \le B c - B \varepsilon_{u,k}(\varpi_{u,k}^\star, \tilde{\mathbf{w}}_{u,k}^\star, \mathbf{p}), \forall k \in \mathcal{K}, \\
    & \sum_{k \in \mathcal{K}} r_{c,k}\le B c - B \varepsilon_d(\varpi_d^\star, \tilde{\mathbf{w}}_d^\star, \mathbf{p}) \\
    & \Gamma_{s,\min} \le c - \varepsilon_s(\varpi_s^\star, \tilde{\mathbf{w}}_s^\star, \mathbf{p}), \\
    & \eqref{constraint:resource_compute}-\eqref{constraint:positive},
\end{align}
\end{subequations}    
By checking the KKT conditions of problem (P1.3), it can be proved that $\{\mathbf{r}_b^\star, \mathbf{r}_c^\star, \mathbf{p}^\star\}$ also satisfies the KKT conditions of problem (P1.2), which completes the proof.

\bibliographystyle{IEEEtran}
\bibliography{reference/mybib}

\end{document}